\newcommand{\B}{\color{black}}
\newtheorem{theorem}{Theorem}
\newtheorem{lem}[theorem]{Lemma}
\title{\LARGE \bf
Different Environment Feedback in Fast-slow Eco-evolutionary Dynamics
}
\author{Lulu Gong, and Ming Cao 
\thanks{ L. Gong and M. Cao are with ENTEG,  Faculty of Science and Engineering,
        University of Groningen, 9747 AG, Groningen, The Netherlands.
        {\tt\small l.gong@rug.nl, m.cao@rug.nl}. The work of L. Gong was supported in part by China scholarship council (CSC). The work of M. Cao was supported in part by the European Research Council (ERC-CoG-771687) and the Netherlands Organization for Scientific Research (NWO-vidi-14134).}%
}
\begin{document}

\maketitle
\thispagestyle{empty}
\pagestyle{plain}

\begin{abstract}
 The fast-slow dynamics of an eco-evolutionary system are studied, where we consider the feedback actions of  environmental resources that are classified into those that are self-renewing and those externally supplied. We show although these two types of resources are drastically different, the resulting closed-loop systems bear close resemblances, which include the same equilibria and their stability conditions on the boundary of the phase space,  and the similar appearances of equilibria in the interior. After closer examination of specific choices of parameter values, we disclose that the global dynamical behaviors of the two types of closed-loop systems can be fundamentally different in terms of limit cycles: the system with self-renewing resources undergoes a generalized Hopf bifurcation such that one stable limit cycle and one unstable limit cycle can coexist; the system with externally supplied resources can only have the stable limit cycle induced by a supercritical Hopf bifurcation. {\B Finally, the explorative analysis is carried out to show the discovered dynamic behaviors are robust in even larger parameter space}.


\end{abstract}

\section{Introduction}
To study complex biological and ecological systems, a range of mathematical models have been proposed, among which the Lotka–Volterra model \cite{lotka1956elements}, \cite{volterra1926variazioni} and MacArthur's consumer-resource model \cite{Macarthur1967}, \cite{CHESSON199026} have been playing prominent roles. The Lotka–Volterra model essentially describes the evolution of the predator and prey populations. The consumer-resource model  covers a wider range of ecosystems, by taking into account diverse resource types. For another class of models, evolutionary game theory has served as the basis driven by the fitness consideration.
 Evolutionary game dynamics study the strategic evolution of one or multiple populations of individuals \cite{Hofbauer:98}, where in  the  classic setting,  the  payoffs  in  each pairwise-interaction game are usually predetermined and specified by constant payoff matrices. Recently, an
eco-evolutionary game model,  introduced by  evolutionary biologists in a series of seminal works \cite{Weitz2016}, \cite{Lin2019}, \cite{Tilman:20}, provides the description for the ecosystems from a dynamic-feedback game theoretical point of view. This model explicitly incorporates both games and resource dynamics into an integrated system,  and  the resulting coupled system dynamics are investigated accordingly.

The eco-evolutionary  model
is first proposed in \cite{Weitz2016}, and is then  generalized in \cite{Tilman:20} to consider different resource dynamics. In \cite{Tilman:20}, it has been shown that the eco-evolutionary systems with renewable and decaying resources have similar behaviors in that their dynamics are qualitatively analogous.  By varying the time-scale parameters, limit cycle behaviors are observed in both systems. Note that in the renewable resource case the population's actions are considered to consume the resource, while the population's actions will produce the resource in the decaying  resource case. 

In the classic consumer-resource model, the resources are usually considered to be consumed by a population of consumers. And the influence of different types of resources on the overall system properties is the usual subject of study \cite{Cui2020}. Different from the existing eco-evolutionary models, in this work, we consider that  all the strategic players in the population  consume or harvest the resource with different harvest efforts. And two distinct resource models, self-renewing and externally supplied resources, are studied. { \B Although our first model with the self-renewing resource is similar to the renewable resource model studied in \cite{Tilman:20}, our second model is drastically different from the decaying resource model in \cite{Tilman:20} in that the strategists are still considered to be consumers in our model, which is consistent with the classic setting in consumer-resource models.}

By extending the concept of resources,  we study  the eco-evolutionary dynamics under the different resource models using two-player two-strategy games. First, we show that  these two systems bare some similarities in terms of boundary equilibria and their stability conditions. The two systems  can have interior equilibria which can be continuous and thus infinite, and can also be two isolated points. Then, to further analyze the stability of interior equilibria, we fix some parameters and conduct bifurcation analysis for one payoff parameter and the time-scale parameter. 
Through both one-parameter and two-parameter analyses, it is shown that both systems can exhibit Hopf bifurcations in the specialised parameter conditions. However, the system with the self-renewing resource can further have the generalized Hopf bifurcation, which is impossible for the other system. Therefore, in this situation it is natural for the first system to have the coexistence of two limit cycles, one stable and the other unstable, but the second system can only 
possess one stable limit cycle. 

We emphasize that although limit cycles have been identified in relevant eco-evolutionary research \cite{Tilman:20}, \cite{rand2017}, \cite{bever1997}, \cite{henrich2004},  \cite{Gong:2020}, unstable and double limit cycles have not been reported in the relevant studies so far. These behaviors correspond to an interesting dynamic bistable phenomenon which has been discovered in Lotka-Volterra models \cite{bistability}, \cite{Yu}. More importantly, we also implement analysis revealing how the two limit cycles can coexist through two-parameter bifurcation. It has been clarified that the time-scale difference and the payoff variation can have great influence on the eco-evolutionary dynamics, and as a result the two systems' behaviors become different and complicated. {\B Additional analysis shows that the discovered dynamic behaviors persist in other parameter conditions.}

The rest of the paper is organized as follows. Section II
introduces the two eco-evolutionary systems. Section III shows some preliminary results in the general settings. Section IV further provides the detailed bifurcation and stability analyses for the interior equilibria, and some numeric examples are given to illustrate the theoretical results. {\B Furthermore, brief analysis on the robustness of the discovered results is given in Section V.} Conclusions are
drawn in Section VI.

\section{Problem setup}
\subsection{Fast-slow eco-evolutionary systems}
Consider in an infinite well-mixed population, the individuals play two-player matrix games with  the strategy set  $\mathcal{S}=\{1,2,...,n\}$. The payoffs for the players are affected by an external environmental factor. {\B We use replicator equations to describe the evolution of strategies in the population, and assume the resource dynamics evolve more slowly than the strategic evolution, which is natural in the ecological applications \cite{Tilman:20}, \cite{hastings2018transient}.  Combining with the resource dynamics,
we obtain the so-called} \emph{eco-evolutionary system}
\begin{equation}\label{EEmodel}
\begin{aligned}
&\epsilon\dot{x}_i=x_i[(A(m)x)_i-x^T A(m)x],\\
&\dot{m}=f(m)-h(x,m),
\end{aligned}
\end{equation}
where $x_i \in [0,1]$ and $m \in \mathbb{R}_+$ respectively denote the frequency of strategy $i$ and the amount of an environmental resource of interest; $x=[x_1,...,x_n]^T$;  $A(m)$ is the environment-dependent payoff matrix; $f(m)$ represents the intrinsic dynamics of the resource, and $h(x,m)$ captures the influence of the population's actions on the resource; $0<\epsilon  \ll 1$ is a small parameter accounting for the difference of time scales between the variables of strategies and resource.

\subsection{Different Resource Dynamics}
In nature, ecosystems can involve many types of resources, among which two typical and contrasting examples are biotic and abotic resources.  The biotic resource, such as forests and animals, usually can self-replicate and thus grows logistically if the influence of the population is absent. Thus its intrinsic growth can be represented by a logistic function \cite{Macarthur1967}, \cite{Weitz2016}, \cite{Tilman:20},
\begin{equation}
    f_1(m)=m(1-m/\kappa),
\end{equation}
where $\kappa$ is the 
carrying capacity of the resource.

In contrast, the abiotic resource, such as
minerals and small molecules, cannot self-replicate and is
usually supplied externally to an ecosystem \cite{Cui2020}. And its intrinsic dynamics can be captured by a liner function \cite{tikhonov2016}, \cite{Cui2020}
\begin{equation}
    f_2(m)=K-w m,
\end{equation}
where $K$ is the constant flux from some external resource supplier, and $w$ is the  self-decay rate of the resource. 

The difference of these two functions is that $f_1(m)$ is concave while $f_2(m)$ is strictly decreasing, as shown in Fig. \ref{fig:twofunctions}.
\begin{figure}[htbp!]
   \centering
   	\includegraphics[width=6cm]{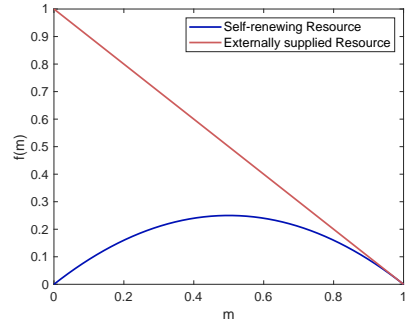}
    \caption{Two different resource models with $\kappa=K=w=1$.}
    \label{fig:twofunctions}
\end{figure}

It is considered that all the strategists  consume or harvest the resource, but different strategies have distinct harvest efforts $e_i>0, e_i\neq e_j, \forall  i, j\in \mathcal{S}$. And the function $h(x,m)$ has the form
\[h(x,m)=qm\sum_{i\in \mathcal{S}} e_ix_i,\]
where parameter $q$  maps the harvest efforts  into the rate of reduction in the resource. 

When the intrinsic dynamics of the resource change, accordingly the feedback rules of the resource to the strategic dynamics are significantly different. In this work, we intend to investigate the  eco-evolutionary dynamics under these two different resource feedback rules.
\subsection{Normalized Models}
We consider the two-strategy game, {\B and assume the second strategy having higher harvest effort on the resource than the first.} Then we only need to focus on one strategy evolution because $x_1+x_2=1$. For the self-renewing resource case, we can use the change of variable 
$r=\frac{m-\kappa(1-qe_2)}{q\kappa(e_2-e_1)}$
to normalize the resource variable such that $r$ is confined in the unit interval $[0,1]$. {\B And we consider that the payoff matrix is linearly dependent on $r$ and is a convex combination of the payoff matrices in the two extreme cases, i.e., when $r=0$ and $1$. Following the convention \cite{Weitz2016}, \cite{Tilman:20}, \cite{Gong:2020}, then we have}
\begin{equation}\label{payoffmatrix}
A(r)= (1-r)\begin{bmatrix}
R_0&S_0\\
T_0&P_0
\end{bmatrix} +r\begin{bmatrix}
R_1&S_1\\
T_1&P_1
\end{bmatrix}.  
\end{equation}
{\B Substituting $A(r)$ into the replicator equation and after a re-scaling of the system time by using $\tau=t/\epsilon$,} we obtain the eco-evolutionary system   under self-renewing resource model as
\begin{equation}\label{eesr2}
\Sigma_s:\begin{cases}
\begin{aligned}
&\begin{aligned}x'=&x(1-x)((b-a+d-c)xr+(a-b)x\\
    &-(b+d)r+b),
\end{aligned}\\
&r'=\epsilon(1-q(e_1r+e_2(1-r)))(x-r),
\end{aligned}  
\end{cases}
\end{equation}
where $'$ represents the time derivative with respect to $\tau$.  For simplicity, we denote $a=R_0-T_0$, $b=S_0-P_0$, $c=T_1-R_1$, and $d=P_1-S_1$. {\B Note that each parameter of $a$, $b$, $c$, and $d$ now represents the payoff difference between the two strategies in the extreme cases.}

In a similar manner, for the externally supplied resource case, after the change of variable 
$r=\frac{(qe_1+w)(m(qe_2+w)-K)}{Kq(e_2-e_1)}$
and the time re-scaling, system (\ref{EEmodel}) under externally supplied resource model can be transformed into
\begin{equation}\label{eees2}
\Sigma_e:\begin{cases}
\begin{aligned}
&\begin{aligned}x'=&x(1-x)((b-a+d-c)xr+(a-b)x\\
    &-(b+d)r+b),
    \end{aligned}\\
&r'=\epsilon(x(1-r)(q e_1+w)-r(1-x)(q e_2+w)).
\end{aligned}
\end{cases}
\end{equation}
Note that in both  systems (\ref{eesr2}) and (\ref{eees2}), it is assumed that $0< e_1<e_2<1/q$ due to its ecological implication \cite{Tilman:20}.

\section{Preparatory analysis}
For systems (\ref{eesr2}) and (\ref{eees2}), the phase spaces are the unit square, i.e., $[0,1]^2$. There are $4$ edges, of which the two edges, $\{(x,r)\in[0,1]^2: x=0\}$ and $\{(x,r)\in[0,1]^2: x=1\}$, are invariant under the two systems. The systems have the same equlibria on the edges, i.e., $(0,0)$ and $(1,1)$.
{\B These two equilibria correspond to the situations of full (resp. none) high consuming strategists and depleted (resp. abundant) resource.}  These two equlibria's stability is readily to check by the Jacobians.  For system (\ref{eesr2}), the Jacobian at $(0,0)$ is 
\[J_s(0,0)=\begin{bmatrix}
b&0\\
\epsilon(1-qe_2)&-\epsilon(1-qe_2)
\end{bmatrix}.\]
Thus, when $b<0$, $(0,0)$ is asymptotically stable, and it will be unstable if $b>0$. 
The Jacobian at $(1,1)$ is 
\[J_s(1,1)=\begin{bmatrix}
c&0\\
\epsilon(1-qe_1)&-\epsilon(1-qe_1)
\end{bmatrix}.\]
So when $c<0$, $(1,1)$ is stable;  if $c>0$ it will be unstable.

Similarly, for system (\ref{eees2}),  the Jacobians of the two edge equilibria are
\[J_e(1,1)=\begin{bmatrix}
b&0\\
\epsilon(qe_1+w)&-\epsilon(qe_2+w)
\end{bmatrix},\]
\[J_s(0,0)=\begin{bmatrix}
c&0\\
\epsilon(qe_2+w)&-\epsilon(qe_1+w)
\end{bmatrix}.\]
It is noted that the stability conditions for these two equilibria are the same as the conditions for system (\ref{eesr2}).

Systems (\ref{eesr2}) and (\ref{eees2}) can also have some  equilibria in the interior of the domain, i.e., $(0,1)^2$. For the inteiror equilibra, we have the following observation.

\begin{lem}
Both systems (\ref{eesr2}) and (\ref{eees2})  can have at most $2$ isolated interior equilibria.
\end{lem}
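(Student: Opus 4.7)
The plan is to reduce the interior equilibrium conditions of each system to a single polynomial equation of degree at most two in the variable $x$, and then invoke the fact that a nonzero polynomial of degree at most two has at most two roots.

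First I would analyze $\Sigma_s$. An interior equilibrium $(x,r)\in(0,1)^2$ must satisfy $r'=0$, which factors as $(1-q(e_1r+e_2(1-r)))(x-r)=0$. The standing assumption $0<e_1<e_2<1/q$ forces the first factor to stay strictly positive on the phase square: setting it to zero would give $r=(qe_2-1)/(q(e_2-e_1))<0$, which lies outside the domain. Hence $x=r$ at every interior equilibrium. Substituting $r=x$ into the bracket of $x'$ and dividing by $x(1-x)\neq 0$ yields the quadratic
\begin{equation*}
(b-a+d-c)x^2+(a-2b-d)x+b=0,
\end{equation*}
which admits at most two roots in $(0,1)$ unless all three coefficients vanish simultaneously, in which case the interior equilibria form a continuous family and are not isolated.

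For $\Sigma_e$ the treatment is analogous but requires one extra algebraic step. In the interior, $r'=0$ can be solved explicitly for $r$ as a fractional-linear function of $x$,
\begin{equation*}
r=\frac{(qe_1+w)\,x}{(qe_2+w)+q(e_1-e_2)\,x},
\end{equation*}
whose denominator does not vanish on $(0,1)$ because $qe_1+w>0$ and $qe_2+w>0$. Similarly, $x'=0$ is affine in $r$, so it also yields $r$ as a fractional-linear function of $x$. Equating the two expressions and clearing denominators produces a polynomial equation in $x$ of degree at most two; unless it is identically zero (the continuum case), it contributes at most two isolated roots.

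The main obstacle I expect is bookkeeping the degenerate parameter regimes in which the leading coefficient of a quadratic collapses, or in which the entire polynomial vanishes identically so that the equilibrium set becomes a curve rather than a finite collection of points. These cases must be explicitly separated out and attributed to the continuum scenario, matching the authors' earlier remark that the interior equilibrium set can be either two isolated points or an infinite continuous family. Once this dichotomy is in place, the bound of at most two \emph{isolated} interior equilibria follows immediately from the fact that a nonzero real polynomial of degree at most two has at most two real roots.
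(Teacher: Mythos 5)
Your proposal is correct and follows essentially the same route as the paper: use positivity of the factor $1-q(e_1r+e_2(1-r))$ to reduce $\Sigma_s$ to the quadratic $(b-a+d-c)x^2+(a-2b-d)x+b=0$, and for $\Sigma_e$ substitute the fractional-linear relation from $r'=0$ into $x'=0$ to obtain another quadratic, with the identically-vanishing case yielding a continuum of non-isolated equilibria. The only cosmetic difference is that you express $r$ as a function of $x$ whereas the paper solves for $x$ in terms of $r$; both give the same degree-two bound.
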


\begin{proof}
Consider system (\ref{eesr2}).
Note that the term $(1-q(e_1r+e_2(1-r))$ is always positive for $r\in [0,1]$ because $0\leq e_1<e_2<1/q$.
The interior equilibria of system (\ref{eesr2}) are the solutions to
\begin{equation}\label{equilibria1}
\begin{aligned}
&(b-a+d-c)xr+(a-b)x-(b+d)r+b=0,\\
&x-r=0.
\end{aligned}    
\end{equation}
Substituting $x=r$ into the first equation of (\ref{equilibria1}) yields 
\begin{equation}\label{equilibria2}
 (b-a+d-c)r^2+(a-2b-d)r+b=0.   
\end{equation}
It can be checked that when $a=d, b=c=0$, system (\ref{eesr2}) has a continuum of equilibria: $\{(x,r)\in (0,1)^2: x=r\}$.
In other cases, (\ref{equilibria2}) will have at most two real solutions. In view of the fact that the interior equilibrium has to be in $(0,1)^2$, system (\ref{eesr2}) can have at most $2$ isolated interior equilibria. 

For system (\ref{eees2}), the interior equilibria are the solutions to
\begin{equation}\label{equilibria3}
\begin{aligned}
&(b-a+d-c)xr+(a-b)x-(b+d)r+b=0,\\
&x(1-r)(qe_1+w)-r(1-x)(qe_2+w)=0.
\end{aligned}    
\end{equation}
In the same calculation, after the substitution of $x=\frac{(qe_2+w)r}{(qe_1+w)+q(e_2-e_1)r}$, one obtains 
\begin{equation}\label{equilibria4}
 \begin{aligned}
 0=&[(b-a+d-c)(qe_2+w)-q(b+d)(e_2-e_1)]r^2\\
 &+[q(ae_2-2be_1-de_1)+w(a-2b-d)]r+b(qe_1+w).
 \end{aligned}
\end{equation}
Similarly, one can conclude that when all the coefficients of (\ref{equilibria4}) are $0$, system (\ref{eees2}) will have a set of equilibria $\{(x,r)\in (0,1)^2:x=\frac{(qe_2+w)r}{(qe_1+w)+q(e_2-e_1)r}\}$; otherwise, system (\ref{eees2}) has at most 2 isolated interior equilibria.
\end{proof}

In the following section, we carry out bifurcation analyses.

\section{Bifurcation and Stability Analyses}
We have identified that  systems (\ref{eesr2}) and (\ref{eees2}) may have $0$ to $2$ isolated interior equilibria. Due to the presence of too many parameters, it is difficult to analyze their stability collectively. Hence, in this section, we let some parameters be fixed specifically, and conduct bifurcation analysis for the interior equilibria with respect to the remaining parameters.  {\B Some payoff parameters, such as $a$, $b$, $d$, can be chosen to be positive and $c$ can vary along the real axis, so that we can focus on one payoff parameter and examine its impact on the system behaviors. To distinguish the two strategies, their harvest efforts should be vastly different. The ecological parameters $q$ and $w$ are not important in our study, and thus can be normalized to be $1$. It has been identified that in such eco-evolutionary systems the time-scale difference is important, so it is also of interest to investigate its influence in our models.} Therefore, we set $q=1$, $w=1$, $e_1=0.2$, $e_2=0.8$, and $a=0.2$, $b=0.1$, $d=0.4$. And we choose $c \in \mathbb{R}$ and $\epsilon \in (0,1]$ as the free parameters.

Under the specific setting, equations (\ref{equilibria2}) and (\ref{equilibria4}) become 
\begin{equation}\label{equilibria22}
 (0.3-c)r^2-0.4r+0.1=0, 
\end{equation}
\begin{equation}\label{equilibria42}
 (0.4-3c)r^2-0.6r+0.2=0. 
\end{equation}
Now, we will discuss the two systems separately.

\subsection{Analysis of system (\ref{eesr2})}
\subsubsection{Hopf bifurcation}
For system (\ref{eesr2}), denote the interior equilibria  by $(x_s^*, r^*_s)$. As $x_s^*=r^*_s$ in this case, it is easy to determine the location of the interior equilibria by solving equation (\ref{equilibria22}). 

As $c$ varies, there can exist $0$, $1$ or $2$ interior equilibria.
When $c<-0.1$, there are no interior equilibria. When $c=-0.1$ and $c=0.3$, there is a unique interior equilibrium $(0.5,0.5)$. If $-0.1<c<0$, there exist two  interior equilibria $(\frac{0.4\pm \sqrt{0.04+0.4c}}{2(0.3-c)}, \frac{0.4\pm \sqrt{0.04+0.4c}}{2(0.3-c)})$. And for $ 0\leq c <0.3$ and $  c \geq 0.3$,
 system (\ref{eesr2}) has a unique interior equilibrium $(\frac{0.4- \sqrt{0.04+0.4c}}{2(0.3-c)}, \frac{0.4- \sqrt{0.04+0.4c}}{2(0.3-c)})$.
 
 Note that although $\frac{0.4- \sqrt{0.04+0.4c}}{2(0.3-c)}$ is not defined at $c=0.3$, from L'Hôpital's rule, one has  $\lim _{c\rightarrow 0.3}\frac{0.4- \sqrt{0.04+0.4c}}{2(0.3-c)}=0.5$. Thus, one can summarize that
 there are two branches of interior equilibria: one branch, denoted by
 \begin{equation}\label{equilibriums1}
     \begin{aligned}
      (x_{s1}^*,r_{s1}^*):=\left(\frac{0.4+\sqrt{0.04+0.4c}}{2(0.3-c)}, \frac{0.4+ \sqrt{0.04+0.4c}}{2(0.3-c)}\right),
     \end{aligned}
 \end{equation} 
only exists for $c\in (-0.1,0)$; the other branch exists  for all $ c\geq -0.1 $, and is denoted by
 \begin{equation}\label{equilibriums2}
     \begin{aligned}
      (x_{s2}^*,r_{s2}^*):=\left(\frac{0.4-\sqrt{0.04+0.4c}}{2(0.3-c)}, \frac{0.4- \sqrt{0.04+0.4c}}{2(0.3-c)}\right).
     \end{aligned}
 \end{equation} 

Note that when $c<-0.1$, there exist no interior equilibria, and  the equilibrium $(1,1)$ is locally stable because it has two negative eigenvalues. The edge equilibrium $(0,0)$ is a saddle point whose stable manifold is actually the edge $\{(x,r)\in [0,1]^2:x=0\}$ because of the invariance. It can be checked that the vector fields in the close neighborhood of this edge all point inwards. It is impossible to have a homoclinic orbit between the unstable and stable manifold of $(0,0)$. Therefore, we have the following claim.
\begin{lem}\label{globalstable11}
When $c<-0.1$, the edge equilibrium $(1,1)$ is asymptotically stable in (\ref{eesr2}), and almost all trajectories starting from $[0,1]^2$  converge to it.
\end{lem}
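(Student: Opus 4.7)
The plan is to combine the local Jacobian analysis at $(1,1)$ with a Poincar\'e--Bendixson argument that enumerates the possible $\omega$-limit sets in $[0,1]^2$ and rules out every candidate except $(1,1)$ for initial data with $x(0)>0$. Local asymptotic stability of $(1,1)$ is already evident from the displayed Jacobian $J_s(1,1)$: under the prescribed parameters its eigenvalues are $c$ and $-\epsilon(1-qe_1)=-0.8\epsilon$, both strictly negative when $c<-0.1$, so an open neighborhood of $(1,1)$ lies in its basin.

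Next I would verify that $[0,1]^2$ is positively invariant. The vertical edges $x=0$ and $x=1$ are invariant because $x'$ carries the factor $x(1-x)$, while on the horizontal edges direct substitution using $q=1$, $e_1=0.2$, $e_2=0.8$ yields $r'|_{r=0}=0.2\,\epsilon\, x\ge 0$ and $r'|_{r=1}=0.8\,\epsilon\,(x-1)\le 0$, so the vector field points inward. Hence trajectories starting in $[0,1]^2$ remain there, and Poincar\'e--Bendixson applies in this compact planar region.

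The crux is then to enumerate candidate $\omega$-limit sets: a fixed point, a periodic orbit, or a cycle of fixed points joined by heteroclinic/homoclinic connections. When $c<-0.1$, equation (\ref{equilibria22}) has no roots in $(0,1)$, so the only equilibria in $[0,1]^2$ are the corners $(0,0)$ and $(1,1)$. To exclude periodic orbits, I would note that any orbit touching an invariant edge $x=0$ or $x=1$ would be trapped there, but the reduced one-dimensional $r$-dynamics on those edges are monotone with a single attracting equilibrium and admit no periodic solution; hence any periodic orbit must lie inside $(0,1)^2$, and by the planar index theorem its enclosed region must contain fixed points with index sum $+1$. That region cannot contain a corner of the square --- a straight ray from either corner to infinity through the complement of $(0,1)^2$ avoids every Jordan curve contained in $(0,1)^2$, placing the corner in the unbounded component --- and there are no interior equilibria, so no such orbit exists. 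To exclude heteroclinic/homoclinic cycles, observe that $(1,1)$ is asymptotically stable so no trajectory leaves it; the only remaining candidate is a homoclinic loop at $(0,0)$, but its stable manifold coincides with the invariant edge $\{x=0\}$ while the unstable branch enters $\{x>0\}$, and invariance of $\{x=0\}$ prevents reconnection.

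Combining these steps, every trajectory with $x(0)>0$ has $\omega$-limit set $\{(1,1)\}$, while the set of trajectories converging instead to $(0,0)$ is precisely its stable manifold, the edge $\{x=0\}$, which has two-dimensional Lebesgue measure zero in $[0,1]^2$; the ``almost all'' conclusion follows. I expect the main obstacle to be the rigorous exclusion of limit cycles, since fast--slow systems naturally raise the concern of relaxation oscillations for small $\epsilon$; the index-theoretic argument bypasses this cleanly by using only the absence of interior equilibria, with the subsidiary geometric remark that an interior Jordan curve cannot enclose a corner of the unit square.
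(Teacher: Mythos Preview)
Your proposal is correct and follows essentially the same route the paper sketches: the paper notes (just before the lemma) that there are no interior equilibria, that $(1,1)$ is a sink and $(0,0)$ a saddle whose stable manifold is the invariant edge $\{x=0\}$, rules out a homoclinic loop at $(0,0)$, and then simply invokes the Poincar\'e--Bendixson theorem without further detail. Your write-up supplies exactly those omitted details---positive invariance of $[0,1]^2$, the index-theoretic exclusion of periodic orbits via the absence of enclosed equilibria, and the explicit identification of the stable set of $(0,0)$---so it is a fleshed-out version of the paper's own argument rather than a different one.
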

This result can be proved straightforwardly by applying the Poincar\'e-Bendixson theorem \cite{Hirsch:04}. We omit the proof here due to the space limit. {\B This result implies that the full resource state can always be achieved   provided that the payoff difference $c$ between the two strategies in the extreme case, $r=1$, is sufficiently small.}

Now consider the Jacobian $J_s^*$ at $(x_s^*, r^*_s)$ which is given by
\begin{equation}\label{injacobian1}
 \begin{bmatrix}
 (x_s^*-{x_s^*}^2)((0.3-c)r_s^*+0.1)&(x_s^*-{x_s^*}^2)((0.3-c)x_s^*-0.5)\\
 \epsilon(0.6r_s^*+0.2)&-\epsilon(0.6r_s^*+0.2)
 \end{bmatrix}.   
\end{equation}
The determinant and trace are given as below:
\[\text{det}(J_s^*)=-\epsilon(0.6x_s^*+0.2)(x_s^*-{x_s^*}^2)(2(0.3-c)x_s^*-0.4),\]
\[\text{tr}(J_s^*)=(x_s^*-{x_s^*}^2)((0.3-c)x_s^*+0.1)-\epsilon(0.6x_s^*+0.2).\]
The eigenvalues can be calculated
\begin{equation}\label{lambda}
 \lambda_{1,2}(J_s^*)=\frac{\text{tr}(J_s^*)\pm \sqrt{{\text{tr}(J_s^*)}^2-4\text{det}(J_s^*)}}{2}.   
\end{equation}
It can be checked that the determinant of Jacobian at $(x_{s1}^*, r^*_{s1})$ is always negative, i.e., 
\[
 \text{det}(J_{s1}^*)=-\epsilon(0.6x_{s1}^*+0.2)(x_{s1}^*-{x_{s1}^*}^2)\sqrt{0.4(0.1+c)}<0
\] 
for $c\in (-0.1,0)$.
Then, from (\ref{lambda}), it can be inferred that the term beneath the square root sign is always positive, which implies the eigenvalues of $J_{s1}^*$ are real and have different signs. Therefore, the equilibrium $(x_{s1}^*, r^*_{s1})$ will always be unstable. 

In contrast, the determinant of Jacobian at $(x_{s2}^*, r^*_{s2})$ is non-negative because 
\[
 \text{det}(J_{s2}^*)=\epsilon(0.6x_{s2}^*+0.2)(x_{s2}^*-{x_{s2}^*}^2)\sqrt{0.4(0.1+c)}\geq 0
\]
for $c\geq -0.1$. And only when $c=-0.1$, the determinant equals $0$. When $\text{tr}(J_{s2}^*)=0$, the eigenvalues of $J_{s2}^*$ will be purely imaginary if the corresponding determinant is not equal to $0$. 
We let  $\text{tr}(J_{s2}^*)=0$, and it results in  
\begin{equation}\label{curve1}
\begin{aligned}
  \epsilon_c^s&:=\frac{(x_s^*-{x_s^*}^2)((0.3-c)x_s^*+0.1)}{0.6x_s^*+0.2}\\
  &=\frac{55c+30c\sqrt{10c+1}+\sqrt{10c+1}+50c^2-1}{(10c-3)(3\sqrt{10c+1}-10c+9)}.
\end{aligned}
\end{equation}
When $c$ varies, (\ref{curve1}) depicts a curve \begin{equation}
  T^s_0=\{(c,\epsilon): -0.1< c<0, \epsilon=\epsilon_c^s\}  
\end{equation}
in the parameter space $(c, \epsilon)$ as shown in Fig. \ref{fig:curve1} (a). On  $T^s_0$, the trace of Jacobian of $(x_{s2}^*,r_{s2}^*)$ is zero and the determinant is positive, thus it
has purely imaginary eigenvalues, i.e.,
\begin{equation}\label{pureimaginaryeigenvalue}
\begin{aligned}
\lambda_{1,2}(J_{s2}^*)|_{\epsilon_c^s}:=\pm i \omega_0
=\pm i \frac{(30c + 10c\hat{c} + \hat{c} - 1)\sqrt{2\hat{c}(\hat{c} + 3)}}{5(10c-3)^2},
\end{aligned}
\end{equation}
where $\hat{c}=\sqrt{10c+1}$.
\begin{figure}[htbp!]
    \centering
   	\subfloat[$T^s_0$]{\includegraphics[width=4cm]{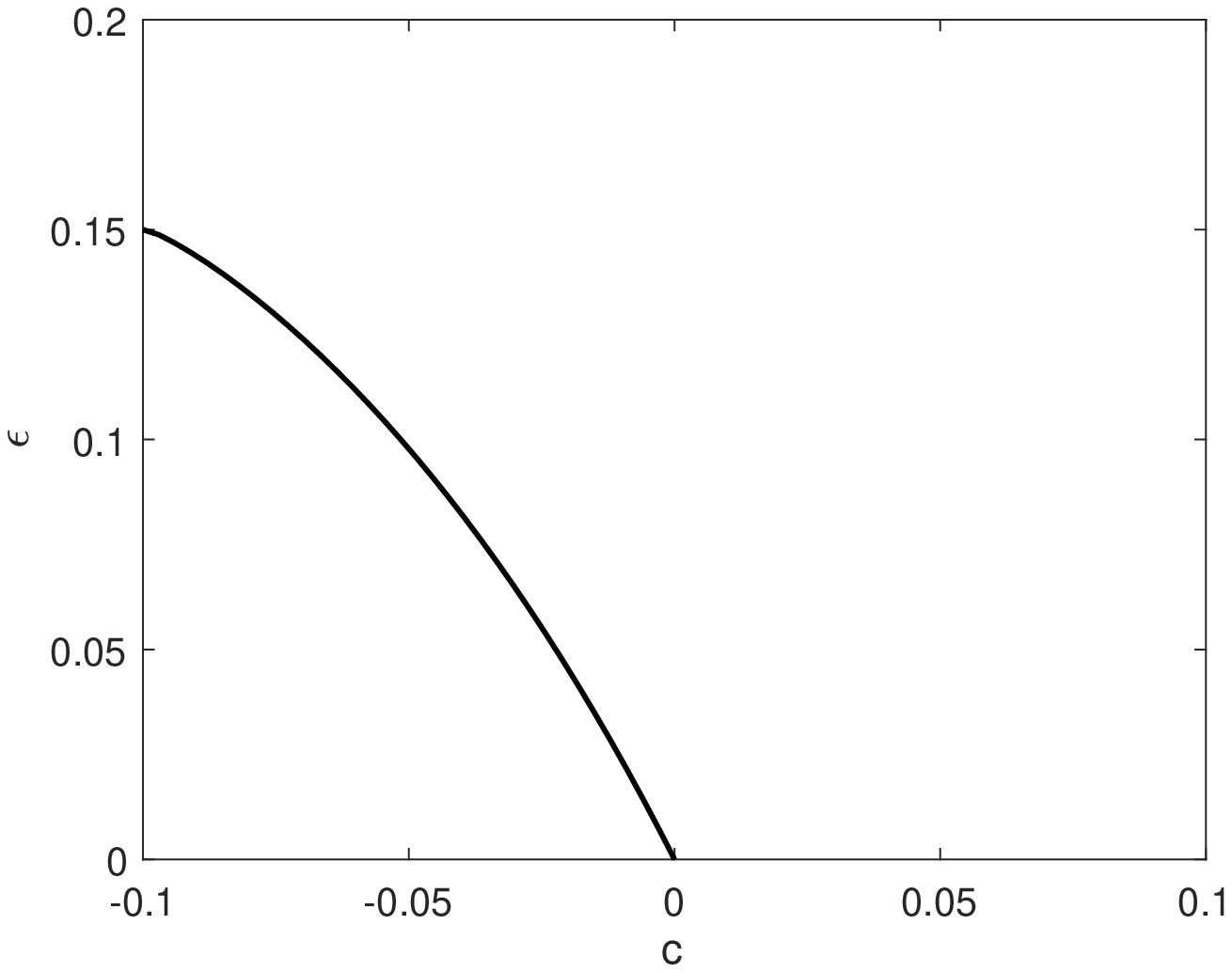}}~~
   	\subfloat[$T^e_0$]{\includegraphics[width=4cm]{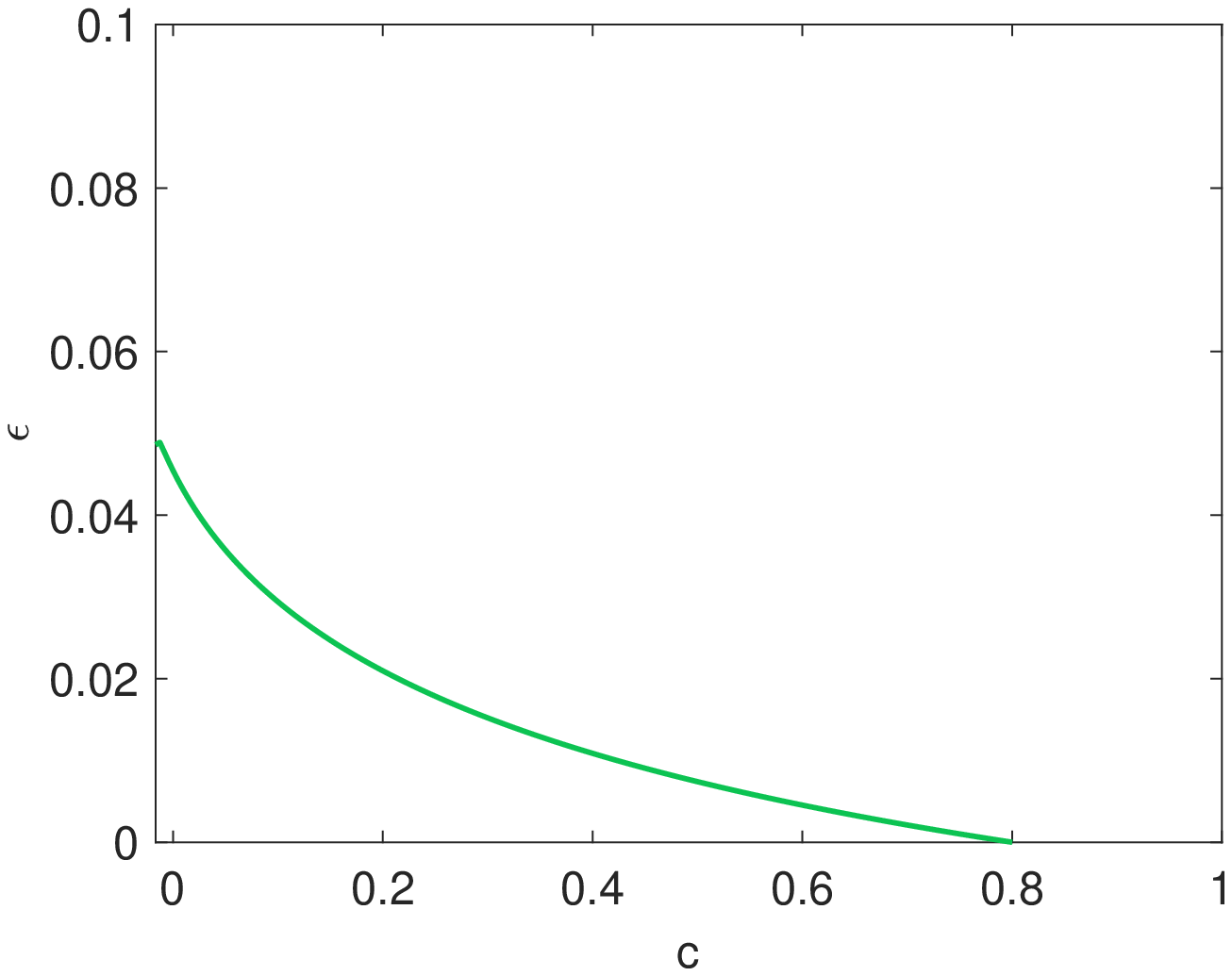}}
    \caption{Zero trace curves}
    \label{fig:curve1}
\end{figure}

Then, for sufficiently small neighborhood of $T_0^s$, the Jacobian (\ref{injacobian1}) has one pair of complex eigenvalues and can be denoted by
\begin{equation}\label{complexeigenvalues}
 \lambda_{1,2}^s(J_{s2}^*)=\mu(c,\epsilon)\pm i\omega (c,\epsilon) 
\end{equation}
such that $\mu(c,\epsilon)=0$ and $\omega(c,\epsilon)=\omega_0 >0$ for $c,\epsilon$ on $T_0^s$.

Now, we take $\epsilon$  as  the main parameter.
The following theorem states that the interior equilibrium $(x_{s2}^*,r_{s2}^*)$  undergoes a Hopf bifurcation.

\begin{theorem}\label{thm:hopfbifurcation1}
When $c\in (-0.1,0)$,
the interior equilibrium $(x_{s2}^*,r_{s2}^*)$ of system (\ref{eesr2}) undergoes a \emph{Hopf bifurcation} at $\epsilon_c^s$ defined by (\ref{curve1}).
\end{theorem}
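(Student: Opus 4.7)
The plan is to invoke the classical planar Hopf bifurcation theorem, whose two hypotheses at the candidate value $\epsilon = \epsilon_c^s$ are: (H1) the Jacobian at $(x_{s2}^*, r_{s2}^*)$ has a pair of purely imaginary, nonzero eigenvalues, and (H2) the real parts of these eigenvalues cross zero transversally as $\epsilon$ varies. The nondegeneracy of the first Lyapunov coefficient, which would distinguish subcritical from supercritical, is not required for the bare assertion ``Hopf bifurcation'' and will presumably be taken up separately when the generalized Hopf phenomenon is discussed in the sequel.

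Hypothesis (H1) is already established in the buildup to the theorem: equation (\ref{pureimaginaryeigenvalue}) gives the eigenvalues as $\pm i \omega_0$ on $T_0^s$, and the determinant computation immediately preceding (\ref{curve1}) shows $\text{det}(J_{s2}^*) > 0$ throughout $c \in (-0.1, 0)$, so $\omega_0 \neq 0$ and the pair is genuinely imaginary rather than a degenerate double zero. The only genuinely new work is therefore the transversality in (H2).

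The key observation for (H2) is that the interior equilibrium $(x_{s2}^*, r_{s2}^*)$ is defined by the quadratic (\ref{equilibria22}) together with $x = r$, neither of which involves $\epsilon$; hence the equilibrium itself and the top row of the Jacobian (\ref{injacobian1}) are all $\epsilon$-independent. From
\[
\text{tr}(J_{s2}^*) = (x_{s2}^* - {x_{s2}^*}^2)((0.3-c)x_{s2}^* + 0.1) - \epsilon(0.6 x_{s2}^* + 0.2),
\]
the trace is affine in $\epsilon$, and since in a neighborhood of $T_0^s$ the eigenvalues form a complex-conjugate pair with $\mu(c,\epsilon) = \text{tr}(J_{s2}^*)/2$, one obtains
\[
\left.\frac{\partial \mu}{\partial \epsilon}\right|_{\epsilon = \epsilon_c^s} = -\frac{0.6\, x_{s2}^* + 0.2}{2} < 0,
\]
which delivers the required transversal crossing.

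The main, and essentially only, place where care is needed is the bookkeeping over the interval $c \in (-0.1, 0)$: one must check that $(x_{s2}^*, r_{s2}^*) \in (0,1)^2$, that the discriminant $0.04 + 0.4c$ remains strictly positive so that the two branches in (\ref{equilibriums1})--(\ref{equilibriums2}) do not coalesce, and that $\text{det}(J_{s2}^*)$ stays strictly positive so the imaginary pair does not degenerate into a zero eigenvalue (which would be a different bifurcation scenario). All three facts are immediate from the explicit formula (\ref{equilibriums2}) and the determinant expression recorded just before (\ref{curve1}). With (H1) and (H2) in place, the planar Hopf theorem then supplies a one-parameter family of periodic orbits branching from $(x_{s2}^*, r_{s2}^*)$ as $\epsilon$ crosses $\epsilon_c^s$, completing the argument.
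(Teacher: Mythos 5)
Your proposal is correct and follows essentially the same route as the paper: invoke the planar Hopf theorem, take the purely imaginary pair on $T_0^s$ as already established, and verify transversality by differentiating the ($\epsilon$-affine) trace, exploiting that the equilibrium itself does not depend on $\epsilon$. Your closed form $\left.\partial\mu/\partial\epsilon\right|_{\epsilon_c^s}=-(0.6\,x_{s2}^*+0.2)/2=\tfrac{10c+3\sqrt{1+10c}-9}{10(3-10c)}$ agrees with the paper's computation up to an apparent sign slip on the $3\sqrt{1+10c}$ term in the paper's displayed formula; both expressions are strictly negative on $c\in(-0.1,0)$, so the conclusion is unaffected.
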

\begin{proof}
From Hopf bifurcation theorem \cite[Theorem 1]{Guckenheimer:00},
a Hopf bifurcation occurs at the parameter point where the equilibrium  has  a  simple  pair  of  pure imaginary eigenvalues and the derivative of the real parts of the eigenvalues  is not zero. In (\ref{pureimaginaryeigenvalue}), one has already obtained that the eigenvalues are  purely imaginary  when $c\in (-0.1,0)$ and $\epsilon=\epsilon_c^s$.

Next, since the trace of the Jacobian is the sum of the eigenvalues, then in view of (\ref{injacobian1}) and (\ref{complexeigenvalues}), one can check that the derivative of the real part of the eigenvalues is
\[\left.\frac{d \mu(c,\epsilon)}{d \epsilon}\right|_{\epsilon_c^s} =\frac{10c-9-3\sqrt{1+10c}}{10(3-10c)}<  0.\] 
Thus, it is confirmed that the equilibrium  $(x_{s2}^*,r_{s2}^*)$ undergoes a Hopf bifurcation when the given condition is satisfied.
\end{proof}

A Hopf bifurcation can be \emph{supcritical} or \emph{subcritical}  depending on  the sign of the \emph{first Lyapunov coefficient} $\ell_1^s$ \cite{Guckenheimer:00}.
We use the  algorithm in \cite{Kuznetsov:04} (see page 103) to calculate $\ell_1^s$, whose expression is given by (\ref{firstlc_s}) in the Appendix. It can be checked that $\ell_1^s$ equals $0$ when $c\approx -0.0889$. Moreover,  as shown in Fig. \ref{fig:curvelc} (a), $\ell_1^s$ is positive when $c\in (-0.1, -0.0889)$, and $\ell_1^s$ is negative when $c\in (-0.0889,0)$.

If $\ell_1^s>0$, the occurred Hopf bifurcation is called subcritical, and an unstable limit cycle will be generated from $(x_{s2}^*,r_{s2}^*)$ for $\epsilon$ in the vicinity of $\epsilon_c$. In contrast, a stable limit cycle will bifurcate from $(x_{s2}^*,r_{s2}^*)$ if $\ell_1^s<0$ because of the supercritical Hopf bifurcation. Some examples illustrating the two different types of limit cycles can be found in Fig. \ref{fig:limitcyles}.

Moreover, we can clarify the stability of the interior equilibrium $(x_{s2}^*,r_{s2}^*)$ as follows:
when $(c,\epsilon)$ is above $T_s^0$, i.e., $c>-0.1$, $\epsilon>\epsilon_c^s$ , the equilibrium is stable because of two negative eigenvalues; it is unstable when $-0.1<c<0$ and $ \epsilon<\epsilon_c^s$ for the reason of two positive eigenvalues.

\subsubsection{Gerneralized Hopf bifurcation}
Now we will study the case of $\ell_1^s=0$, and we have to consider the \emph{generalized Hopf bifurcation}.
Note that  $c\approx -0.0889$ when $\ell_1^s=0$. 
In this case, we have 
\[
        \epsilon_c^s\approx0.1429,~
        \omega_0\approx 0.0715.
\]
Then, using the numerical tool Matcont, the \emph{second Lypunov coefficient} $\ell_2^s$ can be calculated
  $ \ell_2^s=4.8221\times 10^{-5}\neq0$.

To confirm the generalized Hopf bifurcation at parameter values $(-0.0889, 0.1429)$, the regularity condition has to be satisfied \cite{Kuznetsov:04}.  One can verify this condition by checking if the determinant of Jacobian matrix of the map $(c,\epsilon) \rightarrow (\mu  (c,\epsilon),\ell_1^s(c,\epsilon))$ near $(-0.0889, 0.1429)$ is nonzero, namely
\begin{equation}
    \text{det} \left.\begin{pmatrix}
    \frac{\partial \mu}{\partial c}&\frac{\partial \mu}{\partial \epsilon}\\
    \frac{\partial \ell_1^s}{\partial c}&\frac{\partial \ell_1^s}{\partial \epsilon}\\
    \end{pmatrix}\right|_{(-0.0889, 0.1429)} \neq 0.
\end{equation}

The computation of $\ell_2^s$ and $\text{det} (\cdot) $ is tedious and complicated, so the process is omitted here. Instead, we use the numeric bifurcation plot to show that the generalized Hopf bifurcation does take place, as shown in Fig. \ref{fig:limitcyles} (f). Intuitively, the generalized Hopf bifurcation implies that the system has two limit cycles of opposite stability properties for some $(c,\epsilon)$ near the point $(-0.0889, 0.1429)$.

{\B For a given payoff parameter $c$, one can see that  the time-scale difference has an important influence on the system behaviors. There is a critical value of $\epsilon$, such that the interior equilibrium's stability changes as $\epsilon$ crosses it. It results in the existence of stable or unstable limit cycles in the phase space of system (\ref{eesr2}) for the neighborhood of the bifurcation point depending on the signs of  the first Lyapunov coefficients. Moreover, for some certain $c$, two  limit cycles, unstable and stable respectively, could coexist, which makes the dynamic behaviors even more complicated. This behavior corresponds to an interesting dynamic bistable phenomenon as discovered in \cite{Yu}, \cite{bistability}.}

Next, let us study the eco-evolutionary system (\ref{eees2}). The procedure is quite similar.
\subsection{Analysis of system (\ref{eees2})}
For system (\ref{eees2}), denote the interior equilibria  by $(x_e^*, r^*_e)$.
Solving (\ref{equilibria42}) yields the $r$-coordinate of the equilibria. When $c< -1/60$, there are no solutions to (\ref{equilibria42}), and thus the interior equilibrium does not exist. When $-1/60\leq c<2/15 $ and $c>2/15 $, one has $r^*_e=\frac{0.6-\sqrt{0.04+2.4c}}{2(0.4-3c)}$. As $ c=2/15 $, (\ref{equilibria42}) has a single solution $r^*_e=1/3$. Similarly, we can denote the $r$-coordinate of the equilibrium uniformly by $r^*_e=\frac{0.6-\sqrt{0.04+2.4c}}{2(0.4-3c)}$, because of
$\lim _{c\rightarrow 2/15}\frac{0.6-\sqrt{0.04+2.4c}}{2(0.4-3c)}=1/3$.

Because $x_e^*=1.8r_e^*/(1.2+0.6r_e^*)$ in this case, we can obtain that there is a unique interior equilibrium
\begin{equation}\label{equilibriume1}
     \begin{aligned}
      (x_e^*, r^*_e):=\left(\frac{3\sqrt{60c+1}-9}{\sqrt{60c+1}+60c-11},\frac{0.6-\sqrt{0.04+2.4c}}{2(0.4-3c)}\right),
     \end{aligned}
 \end{equation} 
when $c\geq -1/60$. Otherwise, there are no interior equilibria, in which case obviously one has the analogous claim as in Lemma \ref{globalstable11}. {\B It means that the full resource state is also approachable   under the similar condition for  the payoff difference as in system (\ref{eesr2}),  which confirms the enhanced similarity between the two systems.}

Next, we evaluate the Jacobian $J_e^*$ of $(x_e^*, r^*_e)$, i.e.,
\begin{equation}\label{injacobian2}
 \begin{bmatrix}
 (x_e^*-{x_e^*}^2)((0.3-c)r_e^*+0.1)&(x_e^*-{x_e^*}^2)((0.3-c)x_e^*-0.5)\\
 \epsilon(1.2+0.6r_e^*)&-\epsilon(1.8-0.8x_e^*)
 \end{bmatrix}.   
\end{equation}
Following the same procedure in the analysis of system (\ref{eesr2}), we obtain the determinant and trace of  $J_e^*$ as below
\begin{equation*}
    \begin{aligned}
     &\text{det}(J_e^*)=3\epsilon(x_e^*-{x_e^*}^2)\frac{120cr_e^* - 15r_e^* + 20c{r_e^*}^2 - {r_e^*}^2 + 14}{50(r_e^* + 2)},\\   
     &\text{tr}(J_e^*)=(x_e^*-{x_e^*}^2)((0.3-c)r_e^*+0.1)-\epsilon(1.8-0.8x_e^*).
    \end{aligned}
\end{equation*}
The sign of $\text{det}(J_e^*)$ is determined solely by the numerator $120cr_e^* - 15r_e^* + 20c{r_e^*}^2 - {r_e^*}^2 + 14$. Substituting $r^*_e=\frac{0.6-\sqrt{0.04+2.4c}}{2(0.4-3c)}$ into it yields
\begin{equation}\label{numerator}
  \frac{33c' - 525cc' - 215c + 1800c^2c' + 1500c^2 + 17}{2(15c - 2)^2},  
\end{equation}
where  $c'=\sqrt{60c+1}$.
It can be checked that (\ref{numerator}) is always positive for $c\geq -1/60$, which implies that $\text{det}(J_e^*)$ is positive.
We set $\text{tr}(J_e^*)=0$, and then obtain the zero trace curve 
\begin{equation}
    T^e_0=\{(c,\epsilon): -1/60< c<4/5, \epsilon=\epsilon_c^e\},
\end{equation}
where \begin{equation}\label{curve2}
\begin{aligned}
  \epsilon_c^e&:=\frac{(x_e^*-{x_e^*}^2)((0.3-c)r_e^*+0.1)}{1.8-0.8x_e^*}\\
  &=\frac{37cc'-125c+4c'-165c^2c'+255c^2+900c^3+1}{(2-15c)(12cc'-330c-c'+1080c^2+23)}.
\end{aligned}
\end{equation}
 $T^e_0$ is shown in the parameter space $(c,\epsilon)$ as Fig. \ref{fig:curve1} (b).

Clearly, on $T^e_0$,  $(x_{s2}^*,r_{s2}^*)$ 
has purely imaginary eigenvalues, i.e.,
\begin{equation}\label{pureimaginaryeigenvalue2}
\begin{aligned}
\lambda_{1,2}(J_e^*)|_{\epsilon_c^e}&:=\pm i \nu_0,
\end{aligned}
\end{equation}
where \[\begin{aligned}
 \nu_0=&
\Big((72(300c + 5c' - 55)(60c - 10cc' + 3c' - 13)\cdot\\
&(99c' - 1575cc'- 645c + 5400c^2c' + 4500c^2 + 51)\cdot\\
&(15cc' - 75c + c' + 1)^2)/(25(15c - 2)\cdot\\
&(300c - 40)(3c' - 540c + 63)(60c + c' - 11)\cdot\\
&(60cc' - 630c - 11c' + 1800c^2 + 61)^2)\Big)^{1/2}.  \end{aligned}\]
Then, for sufficiently small neighborhood of $T_0^e$, the Jacobian (\ref{injacobian2}) has one pair of complex eigenvalues  denoted by
\begin{equation}\label{complexeigenvalues2}
 \lambda_{1,2}^e(J_{e}^*)=\delta(c,\epsilon)\pm i\nu (c,\epsilon) 
\end{equation}
such that $\delta(c,\epsilon)=0$ and $\nu(c,\epsilon)=\nu_0 >0$ for $c,\epsilon$ on $T_0^e$.
Now, we have the following result.
\begin{theorem}\label{thm:hopfbifurcation2}
When $c\in (-1/60,4/5)$,
the interior equilibrium $(x_e^*,r_e^*)$ of system (\ref{eees2}) undergoes a \emph{ supercritical Hopf bifurcation} at $\epsilon_c^e$ defined by (\ref{curve2}).
\end{theorem}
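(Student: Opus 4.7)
The plan is to mirror the structure of the proof of Theorem \ref{thm:hopfbifurcation1}, applying the Hopf bifurcation theorem of \cite[Theorem 1]{Guckenheimer:00} to the Jacobian $J_e^*$ in (\ref{injacobian2}), and then separately verify that the first Lyapunov coefficient is strictly negative on the entire range $c\in(-1/60, 4/5)$ so that the bifurcation is supercritical rather than subcritical or generalized.

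First I would confirm the two Hopf conditions. The purely imaginary pair at $\epsilon = \epsilon_c^e$ is already established in (\ref{pureimaginaryeigenvalue2}), using that $\mathrm{det}(J_e^*)>0$ on the zero-trace curve $T_0^e$. For the transversality condition, I would differentiate $\mathrm{tr}(J_e^*)$ with respect to $\epsilon$: since $\mathrm{tr}(J_e^*)=2\delta(c,\epsilon)$ and only the second summand depends on $\epsilon$, one gets
\begin{equation*}
\left.\frac{d\delta(c,\epsilon)}{d\epsilon}\right|_{\epsilon_c^e}=-\frac{1.8-0.8\,x_e^*}{2}.
\end{equation*}
Because $x_e^*\in(0,1)$ for all admissible $c$, the bracket satisfies $1.8-0.8\,x_e^*>1>0$, so the derivative is strictly negative and in particular nonzero. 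This establishes that a Hopf bifurcation occurs at $\epsilon_c^e$ for every $c\in(-1/60,4/5)$.

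To determine the type, I would compute the first Lyapunov coefficient $\ell_1^e(c)$ along $T_0^e$ using the projection/normal-form algorithm from \cite{Kuznetsov:04} (page 103), exactly as was done for $\ell_1^s$ and recorded in the Appendix expression (\ref{firstlc_s}). The inputs are the eigenvector and adjoint eigenvector of $J_e^*$ corresponding to $\pm i\nu_0$, together with the symmetric multilinear forms $B(\cdot,\cdot)$ and $C(\cdot,\cdot,\cdot)$ built from the second- and third-order partial derivatives of the right-hand side of (\ref{eees2}) at $(x_e^*,r_e^*)$. These derivatives are straightforward because the vector field is polynomial of degree at most three in $x$ and at most two in $r$. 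Carrying the algebra through yields $\ell_1^e(c)$ as a rational function of $c$ and $c'=\sqrt{60c+1}$.

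The main obstacle, and the step that distinguishes this proof from Theorem \ref{thm:hopfbifurcation1}, is showing that $\ell_1^e(c)<0$ on the \emph{whole} interval $(-1/60,4/5)$, i.e.\ that no generalized Hopf point exists in this system. I would handle this by substituting $c=(u^2-1)/60$ with $u\in(0,7)$ to eliminate the radical $c'$, reducing $\ell_1^e$ to a rational function in $u$ with explicit numerator and denominator polynomials. The sign of $\ell_1^e$ then reduces to the sign of a single univariate polynomial, which can be bounded using Sturm's theorem or a straightforward analysis of its roots and end-behavior. A numerical plot of $\ell_1^e(c)$ (analogous to Fig.\ \ref{fig:curvelc}(a) for $\ell_1^s$) would be included to corroborate the analytic sign. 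Once $\ell_1^e<0$ is established, the standard supercritical Hopf classification applies and a unique stable limit cycle bifurcates from $(x_e^*,r_e^*)$ as $\epsilon$ crosses $\epsilon_c^e$, completing the proof.
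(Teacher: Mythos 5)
Your proposal is correct and follows essentially the same route as the paper: purely imaginary eigenvalues on $T_0^e$ from the positivity of $\text{det}(J_e^*)$, transversality via $d\delta/d\epsilon<0$, and supercriticality from $\ell_1^e<0$ on all of $(-1/60,4/5)$. Your transversality bound $d\delta/d\epsilon=-(1.8-0.8x_e^*)/2<-1/2$ is in fact a cleaner way to see the sign than the paper's equivalent rational expression $\tfrac{3(c'-180c+21)}{10(60c+c'-11)}$, and your suggestion to certify $\ell_1^e<0$ by rationalizing with $c=(u^2-1)/60$ and applying Sturm's theorem would make rigorous what the paper only supports by the plot in Fig.~\ref{fig:curvelc}(b).
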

\begin{proof}
The proof of the occurrence of Hopf bifurcation is similar to that of Theorem \ref{thm:hopfbifurcation1}. 
From (\ref{pureimaginaryeigenvalue2}), one has already known that the eigenvalues are a pair of pure imaginary eigenvalues, i.e., $\pm i \nu_0$, when $c\in (-1/60,4/5)$ and $\epsilon=\epsilon_c^e$.

In view of (\ref{injacobian2}) and (\ref{complexeigenvalues2}), one obtains the derivative of the real part of the eigenvalues as below
\[\left.\frac{d \delta(c,\epsilon)}{d \epsilon}\right|_{\epsilon_c^e} =\frac{3(c' - 180c + 21)}{10(60c + c' - 11)}<  0.\] 
Thus,  the equilibrium  $(x_{s2}^*,r_{s2}^*)$ undergoes a Hopf bifurcation at  $\epsilon_c^e$ for $c\in (-1/60,4/5)$. 
Next, we evaluate the first Lyapunov coefficient at $(x_{s2}^*,r_{s2}^*)$ which is expressed in (\ref{firstlc_e}) in Appendix.  It can be checked that $\ell_1^e$ is always negative for  $c\in (-1/60,4/5)$ as shown in Fig. \ref{fig:curvelc} (b), which confirms that the corresponding Hopf bifurcation is supercritical.
\end{proof}

The same argument  can be used to clarify the stability of the interior equilibrium $(x_{e}^*,r_{e}^*)$, i.e.,
when $c>-1/60$, and $\epsilon>\epsilon_c^e$ , the equilibrium is stable; it is unstable when $-1/60<c<4/5$ and $ \epsilon<\epsilon_c^e$.

{\B In system (\ref{eees2}), one can observe that similar Hopf bifurcation and stable limit cycles as in system (\ref{eesr2}) can take place, which confirms the time-scale difference's important impact on the dynamic behaviors of the two systems. However, the existence of the double limit cycles is not possible due to the negative first Lyapunov coefficient, no matter how big the payoff parameter is. }

\subsection{Numeric Illustration}
Some simulations are given in Fig. \ref{fig:limitcyles} to show the potential behaviors of systems (\ref{eesr2}) and (\ref{eees2}) respectively. We also give in the same figure  the two-parameter bifurcation plot for  system (\ref{eesr2}).

\begin{figure}[htbp!]
    \centering
    \subfloat[Stable eq. $(1,1)$ in system (5)]{\includegraphics[width=4cm]{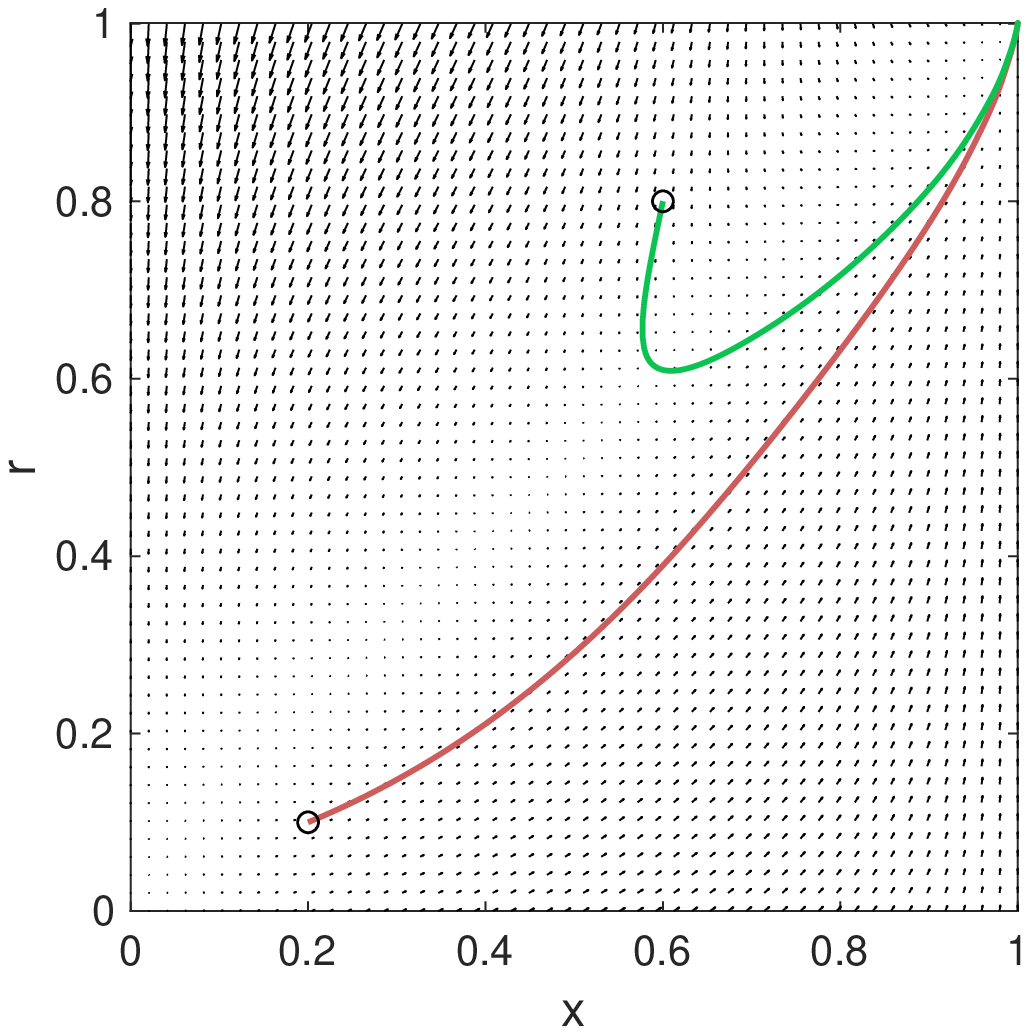}}~~
	\subfloat[Stable eq. $(1,1)$ in system (6)]{\includegraphics[width=4cm]{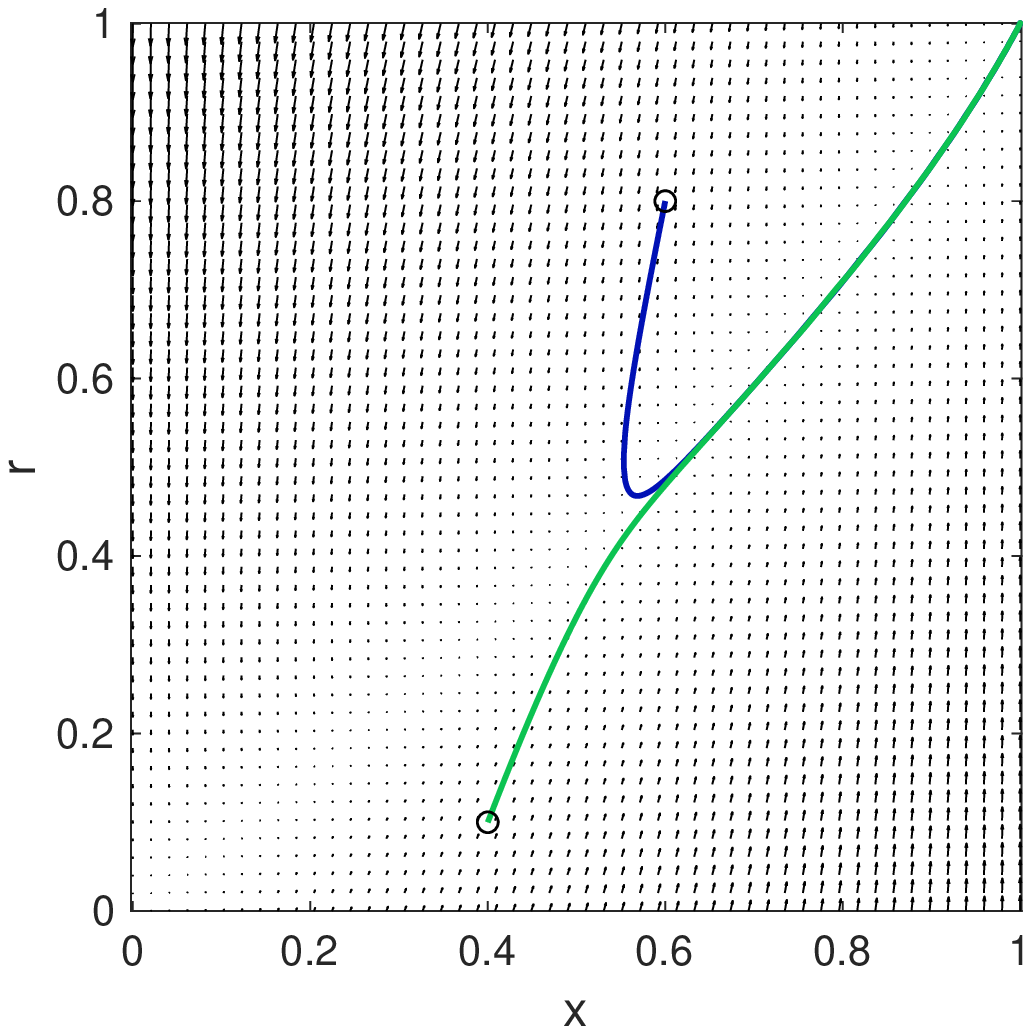}}\\
   	\subfloat[Stable limit cycle in system (5)]{\includegraphics[width=4cm]{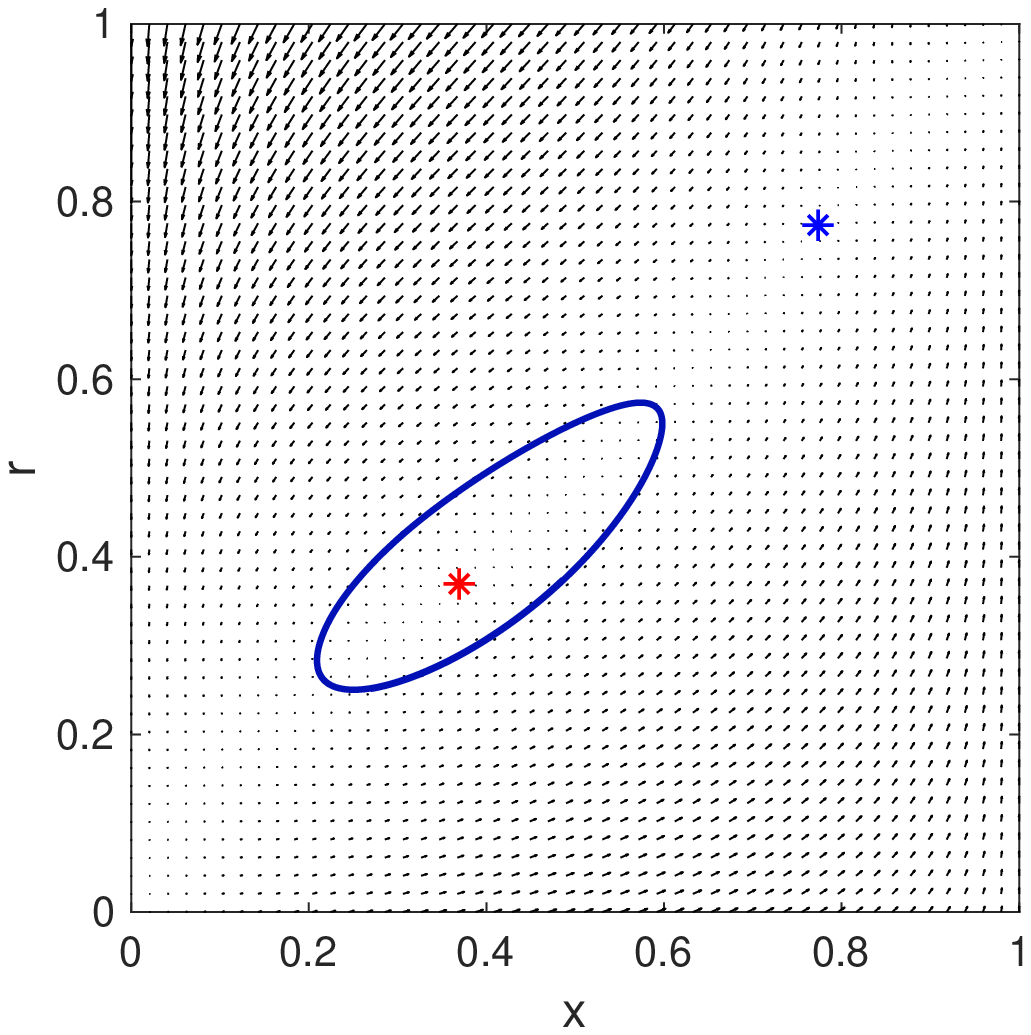}}~~
	\subfloat[Stable limit cycle in system (6)]{\includegraphics[width=4cm]{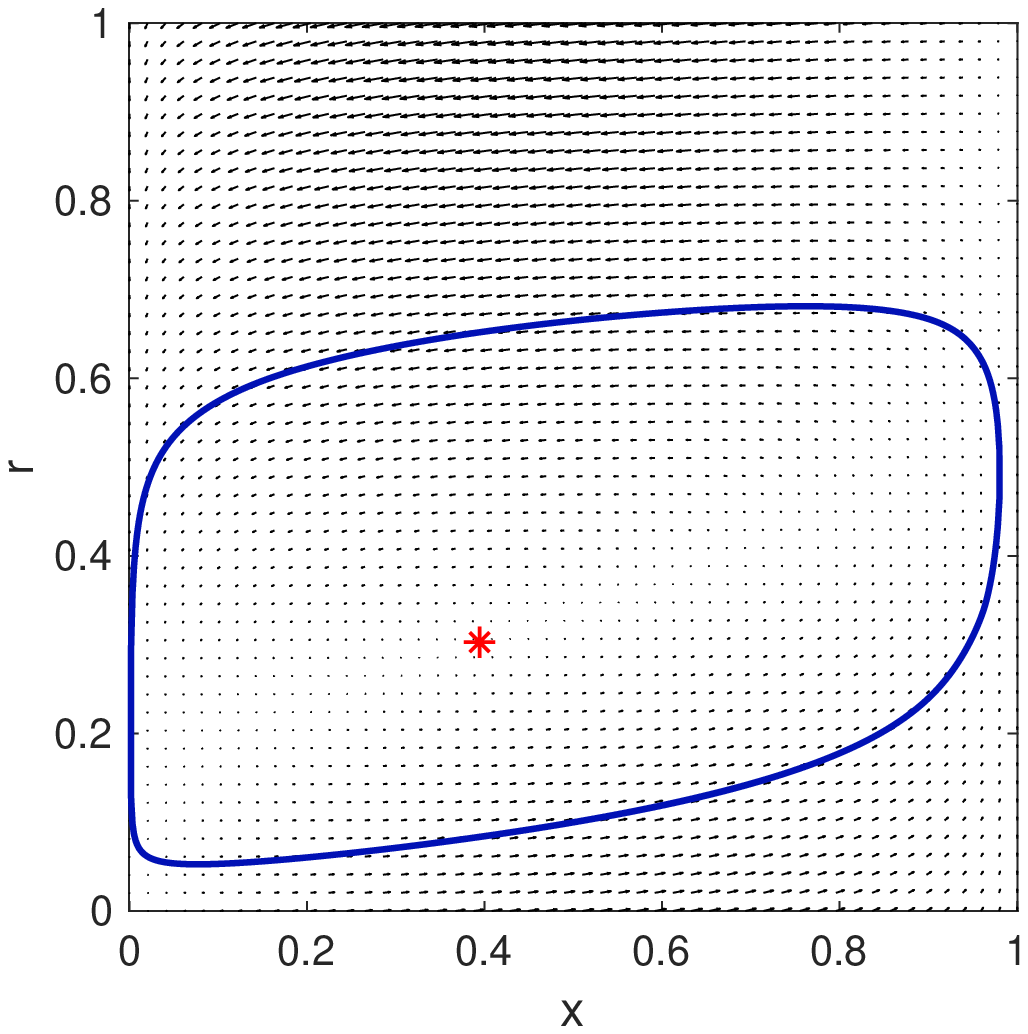}}\\
	\subfloat[Coexistence of two limit cycles in system (5)]{\includegraphics[width=4cm]{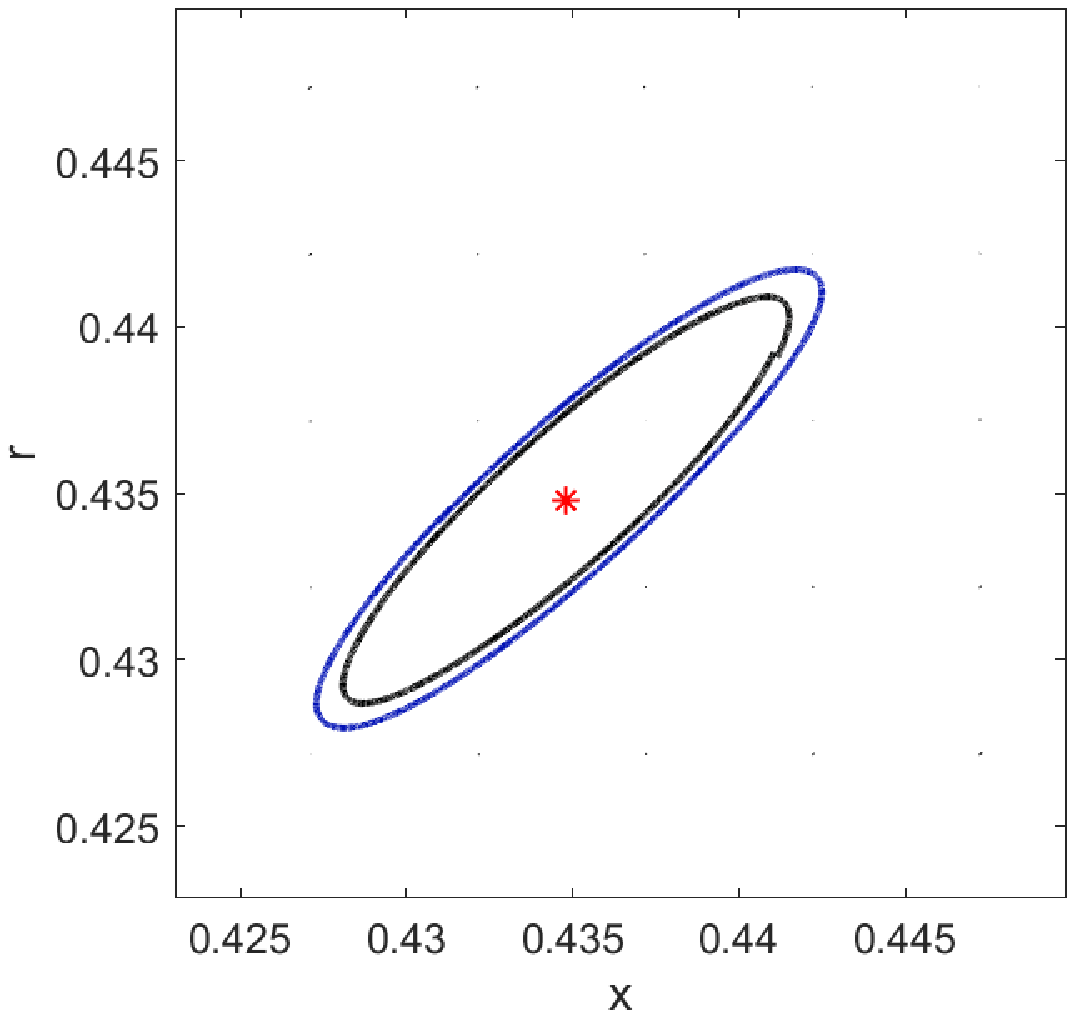}}~~
	\subfloat[Two-parameter bifurcation plot]{\includegraphics[width=4cm]{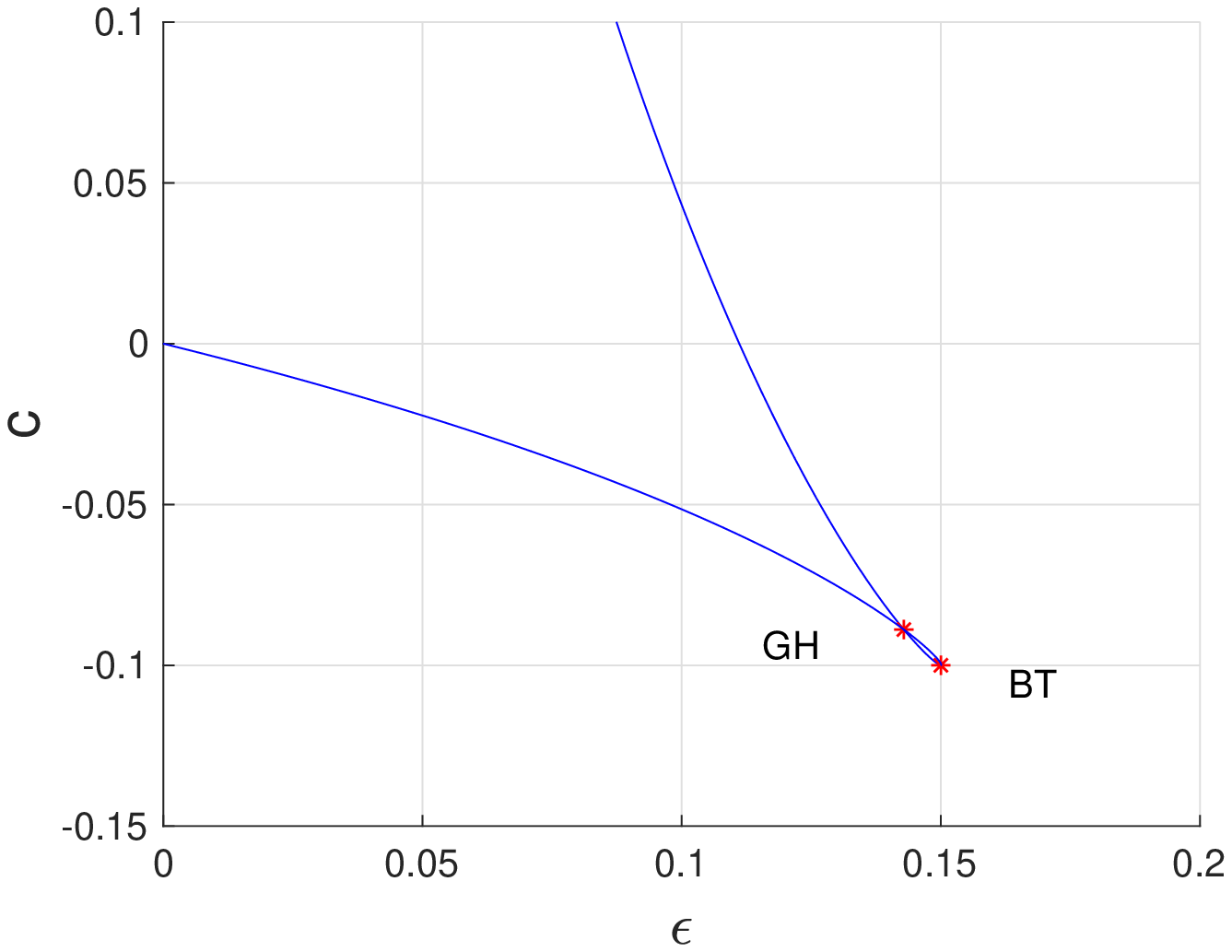}}
    \caption{The parameters in the plots are as follows: (a). $\epsilon=0.2$, $c=-0.15$; (b). $\epsilon=0.2$, $c=-0.05$; (c). $\epsilon=0.125$, $c=-0.05$; (d). $\epsilon=0.02$, $c=0.2$; (e). $\epsilon=0.145$, $c=-0.091$. In (f) GH stands for the generalized Hopf bifurcation, and BT stands for the Bogdanov-Takens bifurcation which is not covered in this work.}
    \label{fig:limitcyles}
\end{figure}

\section{Robust Dynamic Behaviors}
The bifurcation and stability analyses presented in last section are restricted to the given specific parameter conditions. Although the obtained results are rigorous and precise, it raises a question: if Hopf bifurcations and the resulting limit cycles can exist in the large space of parameters? In this section, we try to address this problem with suitable theoretical analysis as well as numeric simulations. 

\subsection{System (\ref{eesr2})}
Solving equation (\ref{equilibria2}) yields $r=\frac{(2b+d-a)\pm \sqrt{(a-d)^2+4bc}}{2(b-a+d-c)}$. Denote the potential interior equilibria of system (\ref{eesr2}) by $(\hat{x}_s, \hat{r}_s)_{\pm}=(\frac{(2b+d-a)\pm \sqrt{(a-d)^2+4bc}}{2(b-a+d-c)}, \frac{(2b+d-a)\pm \sqrt{(a-d)^2+4bc}}{2(b-a+d-c)})$.
Note that $\hat{x}_s= \hat{r}_s$, then  the Jacobian $\hat{J}_s$ of the equilibria  is given by
\begin{equation}\label{jacobian1s}
 \begin{bmatrix}
 (\hat{x}_s-{\hat{x}_s}^2)(\Delta\hat{x}_s+a-b)&(\hat{x}_s-{\hat{x}_s}^2)(\Delta\hat{x}_s-b-d)\\
 \epsilon(1-q(e_1\hat{x}_s+e_2(1-\hat{x}_s)))&-\epsilon(1-q(e_1\hat{x}_s+e_2(1-\hat{x}_s)))
 \end{bmatrix}    
\end{equation}
where $\Delta=b-a+d-c$.

The determinant and trace of $\hat{J}_s$ are given by
\begin{equation}
\begin{aligned}
  \text{det}(\hat{J}_s)=&-\epsilon(1-q(e_1\hat{x}_s+e_2(1-\hat{x}_s))(\hat{x}_s-{\hat{x}_s}^2)\cdot\\
  &~~(2\Delta\hat{x}_s+a-2b-d),   
\end{aligned}
\end{equation}
\begin{equation}
   \text{tr}(\hat{J}_s)=(\hat{x}_s-{\hat{x}_s}^2)(\Delta\hat{x}_s+a-b)-\epsilon(1-q(e_1\hat{x}_s+e_2(1-\hat{x}_s))).
\end{equation}

Recalling that the real interior equilibria should satisfy $\hat{x}_s\in (0,1)$, and $0< e_1<e_2<1/q$, then the sign of $\text{det}(\hat{J}_s)$ depends only on the term $(2\Delta\hat{x}_s+a-2b-d)$. If $\text{det}(\hat{J}_s)>0$, one has $(2\Delta\hat{x}_s+a-2b-d)<0$. It follows  $\pm \sqrt{(a-d)^2+4bc}<0$. It is impossible that $\sqrt{(a-d)^2+4bc}<0 $, which means the determinant of Jacobian at $(\hat{x}_s, \hat{r}_s)_{+}$ will be non-positive. Thus, we only need to focus on $(\hat{x}_s, \hat{r}_s)_{-}$ with $\sqrt{(a-d)^2+4bc}>0 $. 

Given $\text{det}(\hat{J}_s)>0$, the eigenvalues of $\hat{J}_s)$ will be both positive if $\text{tr}(\hat{J}_s)>0$ and negative if $\text{tr}(\hat{J}_s)<0$. The stability of $(\hat{x}_s, \hat{r}_s)_{-}$ switches when the value of $\text{tr}(\hat{J}_s)$ crosses $0$. Note that the position of interior equilibria is independent of parameter $\epsilon$ and $\text{tr}(\hat{J}_s)$ is linearly dependent on $\epsilon$. We can let $\text{tr}(\hat{J}_s)=0$, and it yields 
\begin{equation}\label{curve51}
\begin{aligned}
  \epsilon^s&:=\frac{(\hat{x}_s-{\hat{x}_s}^2)(\Delta\hat{x}_s+a-b)}{1-q(e_1\hat{x}_s+e_2(1-\hat{x}_s))},
\end{aligned}
\end{equation}
which is similar to (\ref{curve1}). By definition, it requires $\epsilon^s\in (0,1)$. However, due to too many parameters, it is still not clear to check this condition analytically. By fixing some parameters and varying the rest parameters, one can numerically  test if there exist such $\epsilon^s\in (0,1)$. We give some simulations in Fig. \ref{fig:robustlimitcyles} to visualize the shapes of $\epsilon^s$ in different coordinates, which shows that the existence of $\epsilon^s$ is not rare.

The change of stability of the equilibrium does not suffice to ensure the occurrence of Hopf bifurcation as stated in Theorem \ref{thm:hopfbifurcation1}. And it is challenging to calculate the first and second Lyapunov coefficients with so many unknown parameters. However, using Matcont, we can plot the bifurcation diagram as shown in Fig. \ref{fig:robustlimitcyles}. In this sense, we can say that the occurrence of (generalized) Hopf bifurcation is not rare in system (\ref{eesr2}) and the limit cycle behaviors can exist in different parameter conditions.  

\begin{figure}[htbp!]
    \centering
    \subfloat[$\epsilon^s$ shape in coordinates $(b, c, \epsilon)$]{\includegraphics[width=4cm]{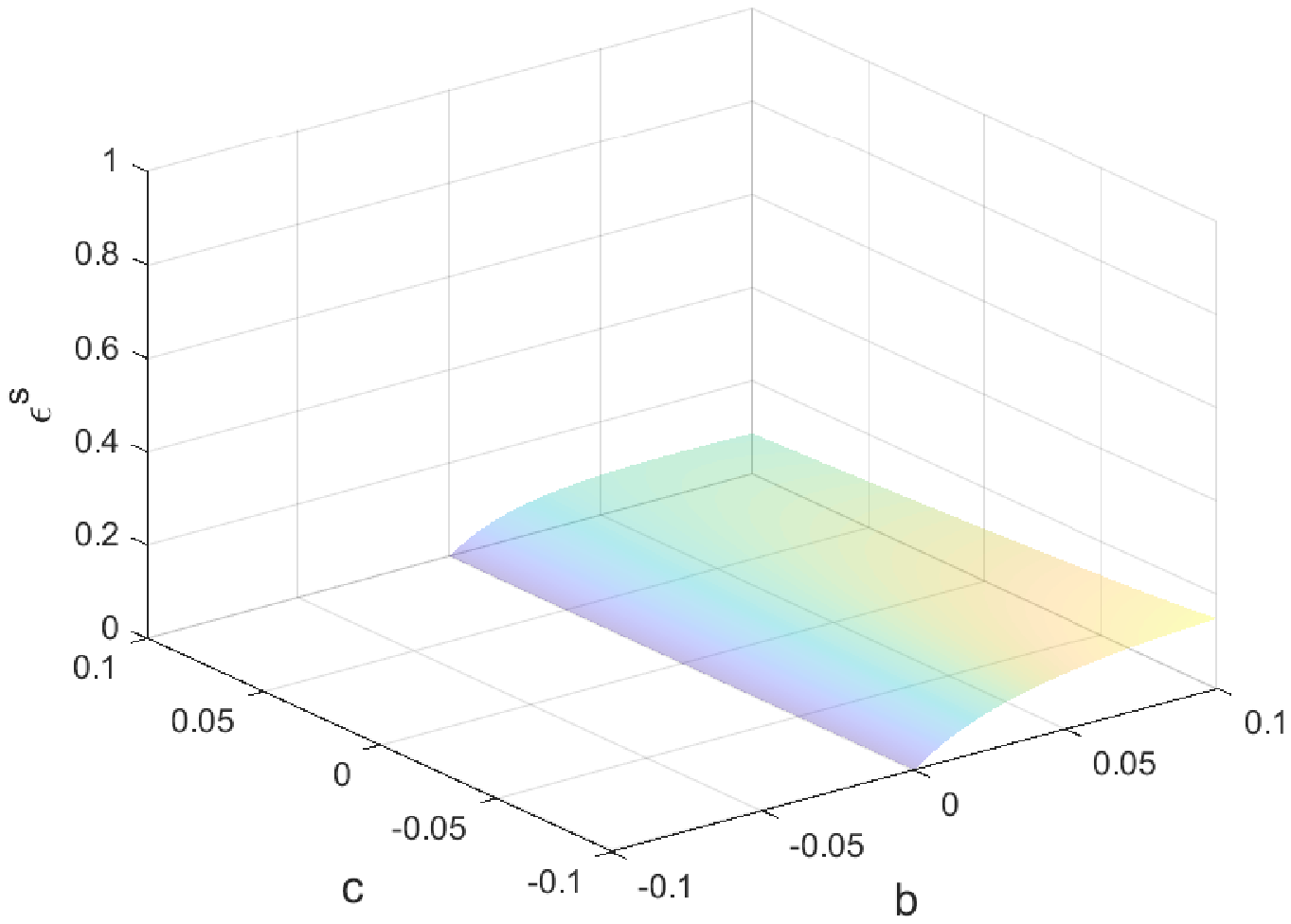}}~~
	\subfloat[$\epsilon^s$ shape in coordinates $(c, e_1, \epsilon)$ ]{\includegraphics[width=4cm]{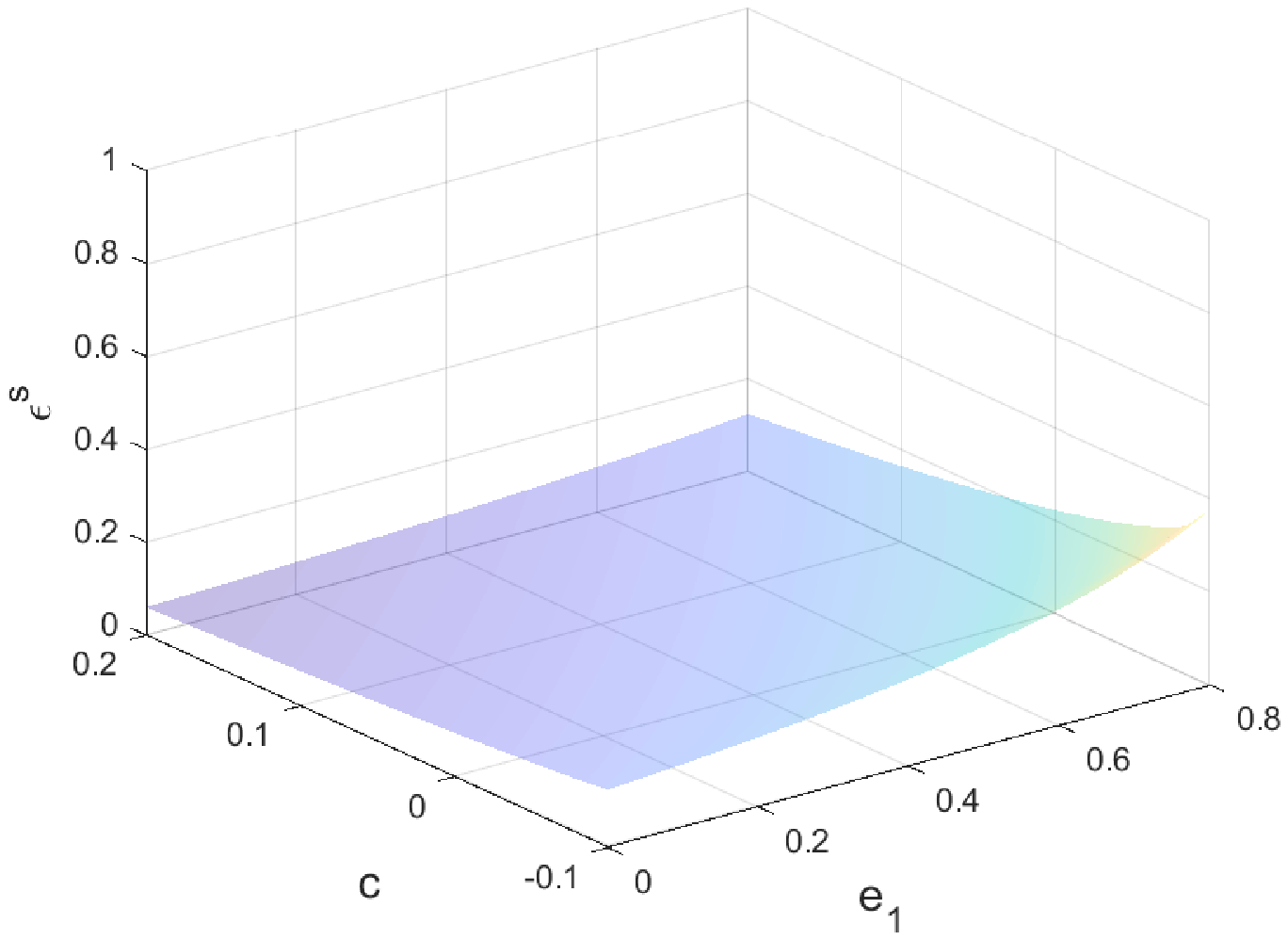}}\\
   	\subfloat[Two-parameter bifurcation plot 1]{\includegraphics[width=4cm]{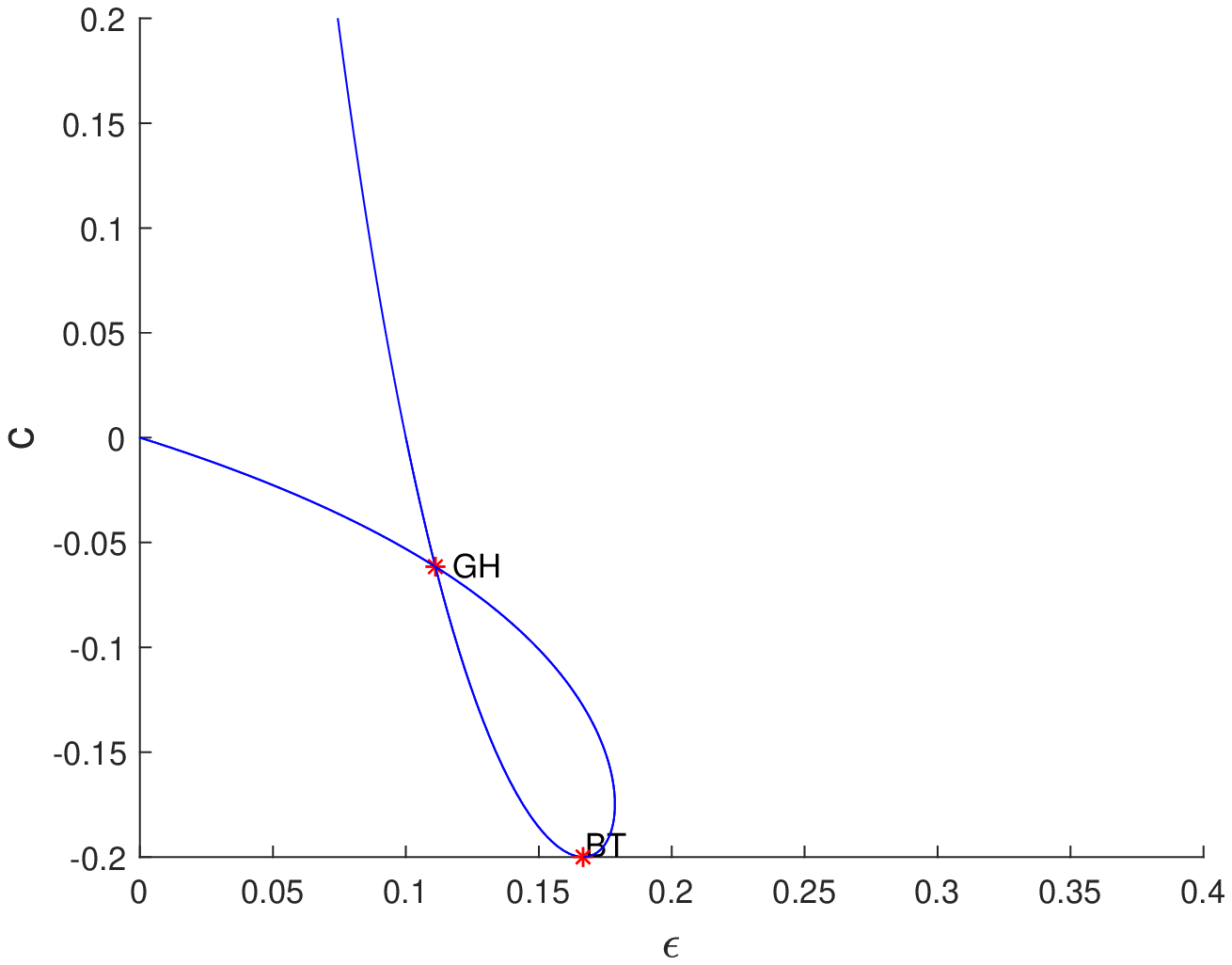}}~~
	\subfloat[Two-parameter bifurcation plot 2]{\includegraphics[width=4cm]{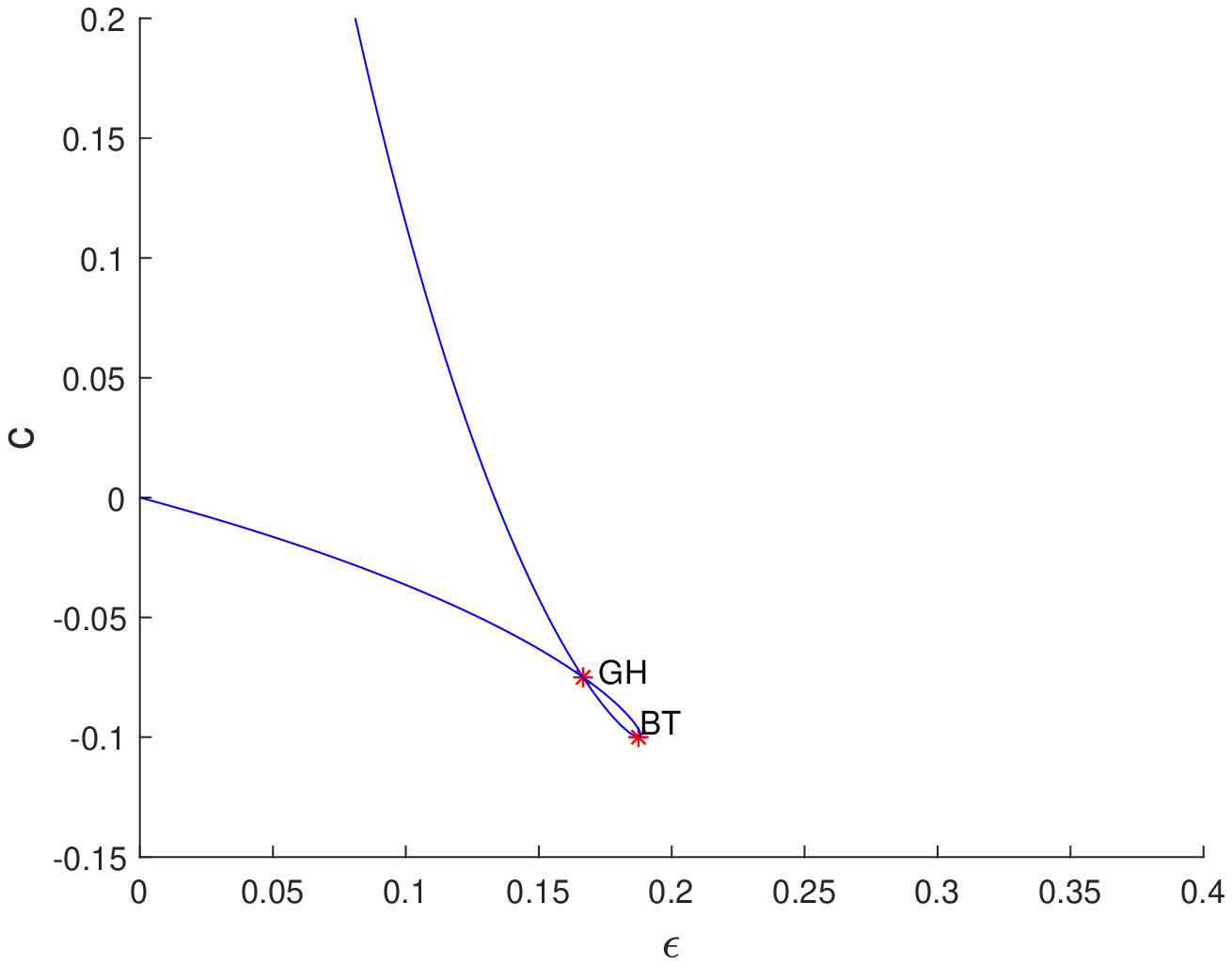}}
    \caption{The parameters in  (a) and  (b) are the same as in Section IV with one more free parameter respectively, i.e., $b$ and $e_1$; the parameters in  (c) and  (d) differ from previous setting of Fig. 3 (f) in $b=0.05$ and $e_1=0.4$ respectively.  }
    \label{fig:robustlimitcyles}
\end{figure}

\subsection{System (\ref{eees2})}
In a similar manner, we can analyze system (\ref{eees2}). Solving equation (\ref{equilibria3}) yields $r=\frac{q(ae_2-2be_1-de_1)+w(a-2b-d)\pm \sqrt{\alpha}}{2(\Delta(qe_2+w)-q(b+d)(e_2-e_1))}$ where $\alpha=w(2b - a + d) + q(2be_1 - ae_2 + de_1))^2 + b(w + e_1q)(4(w + e_2q)(a - b + c - d) - 4q(b + d)(e_1 - e_2))$. Note that $x=\frac{(qe_2+w)r}{(qe_1+w)+q(e_2-e_1)r}$, then the potential interior equilibria can be denoted by $(\hat{x}_e,\hat{r}_e)_\pm=(\frac{(qe_2+w)\hat{r}_e}{(qe_1+w)+q(e_2-e_1)\hat{r}_e},\frac{q(ae_2-2be_1-de_1)+w(a-2b-d)\pm \sqrt{\alpha}}{2(\Delta(qe_2+w)-q(b+d)(e_2-e_1))})$.

The  Jacobians $\hat{J}_e$ at the equilibria  are
\begin{equation}\label{jacobian1s}
 \begin{bmatrix}
 (\hat{x}_e-{\hat{x}_e}^2)(\Delta\hat{r}_e+a-b)&(\hat{x}_e-{\hat{x}_e}^2)(\Delta\hat{x}_e-b-d)\\
 \epsilon(qe_1+w+\hat{r}_eq(e_2-e_1))&-\epsilon(qe_2+w+\hat{x}_eq(e_1-e_2))
 \end{bmatrix}.    
\end{equation}
Clearly, one can obtain the determinant and trace of $\hat{J}_e$:
\begin{equation}
\begin{aligned}
  \text{det}(\hat{J}_e)=&-\epsilon(\hat{x}_e-{\hat{x}_e}^2)((w + e_2q - \frac{q\hat{r}_e(w + qe_2)(e_1 - e_2)}{w + qe_1 - q\hat{r}_e(e_1 - e_2)})\cdot\\
  &~(a - b + \Delta \hat{r}_e) - (b + d - \frac{\Delta \hat{r}_e(w + qe_2)}{w + qe_1 - q\hat{r}_e(e_1 - e_2)})\cdot\\
  &~(w + qe_1 + q\hat{r}_e(e_1 - e_2))),   
\end{aligned}
\end{equation}
\begin{equation}
\begin{aligned}
   \text{tr}(\hat{J}_e)=&(\hat{x}_e-{\hat{x}_e}^2)(\Delta\hat{r}_e+a-b)\\
   &-\epsilon(qe_2+w+\frac{q(e_1-e_2)(qe_2+w)\hat{r}_e}{(qe_1+w)+q(e_2-e_1)\hat{r}_e}).
\end{aligned}
\end{equation}
Compared with system (\ref{eesr2}),
the expression of $\text{det}(\hat{J}_e)$ and $ \text{tr}(\hat{J}_e)$ are much more complicated, and it is impossible to get any clues to evaluate their properties. Therefore, we let $b$, $c$,  $\epsilon$ and $e_1$ be unfixed and fix the other parameters as  same as in Section IV. In this way, we have 

\begin{equation}
\begin{aligned}
  \text{det}(\hat{J}_e)=&-\epsilon(\hat{x}_e-{\hat{x}_e}^2)(\frac{1.8(e_1+1)(0.2 - b + \Delta \hat{r}_e)}{(e_1+1)+(0.8-e_1)\hat{r}_e}\\
  & + \frac{1.8\Delta \hat{r}_e(1 + e_1 + \hat{r}_e(e_1 - 0.8))}{(e_1+1)+(0.8-e_1)\hat{r}_e}-b -0.4),   
\end{aligned}
\end{equation}
\begin{equation}
\begin{aligned}
   \text{tr}(\hat{J}_e)=(\hat{x}_e-{\hat{x}_e}^2)(\Delta\hat{r}_e+0.2-b)-\epsilon\frac{1.8(e_1+1)}{(e_1+1)+(0.8-e_1)\hat{r}_e}.
\end{aligned}
\end{equation}
One can check that $\text{det}(\hat{J}_e)>0$ for $(\hat{x}_e,\hat{r}_e)_-$. 
It follows that
\begin{equation}\label{curve51}
\begin{aligned}
  \epsilon^e&:=\frac{(\hat{x}_e-{\hat{x}_e}^2)(\Delta\hat{r}_e+0.2-b)(e_1+1+(0.8-e_1)\hat{r}_e)}{1.8(e_1+1)}.
\end{aligned}
\end{equation}
In the simulations in Fig. \ref{fig:robustlimitcyles2}, we can see the shape of $\epsilon^e$, which implies such $\epsilon^e$ exists for a large space of parameters. Again, using Matcont we can show that Hopf bifurcations can happen while generalized Hopf bifurcation does not appear in different parameter conditions.

\begin{figure}[htbp!]
    \centering
    \subfloat[$\epsilon^e$ shape in coordinate $(b, c, \epsilon)$ ]{\includegraphics[width=4cm]{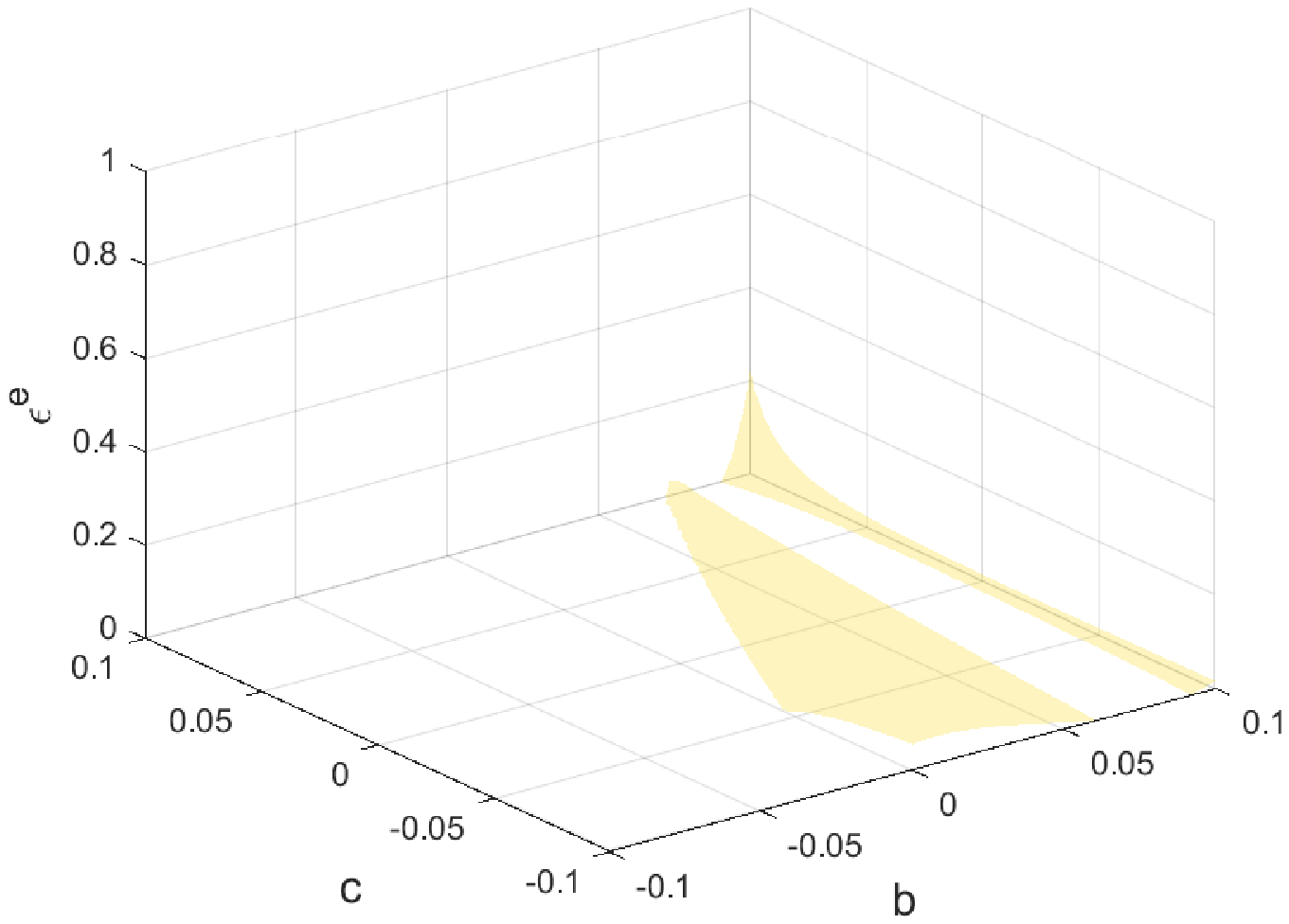}}~~
	\subfloat[$\epsilon^e$ shape in coordinate $(c, e_1, \epsilon)$ ]{\includegraphics[width=4cm]{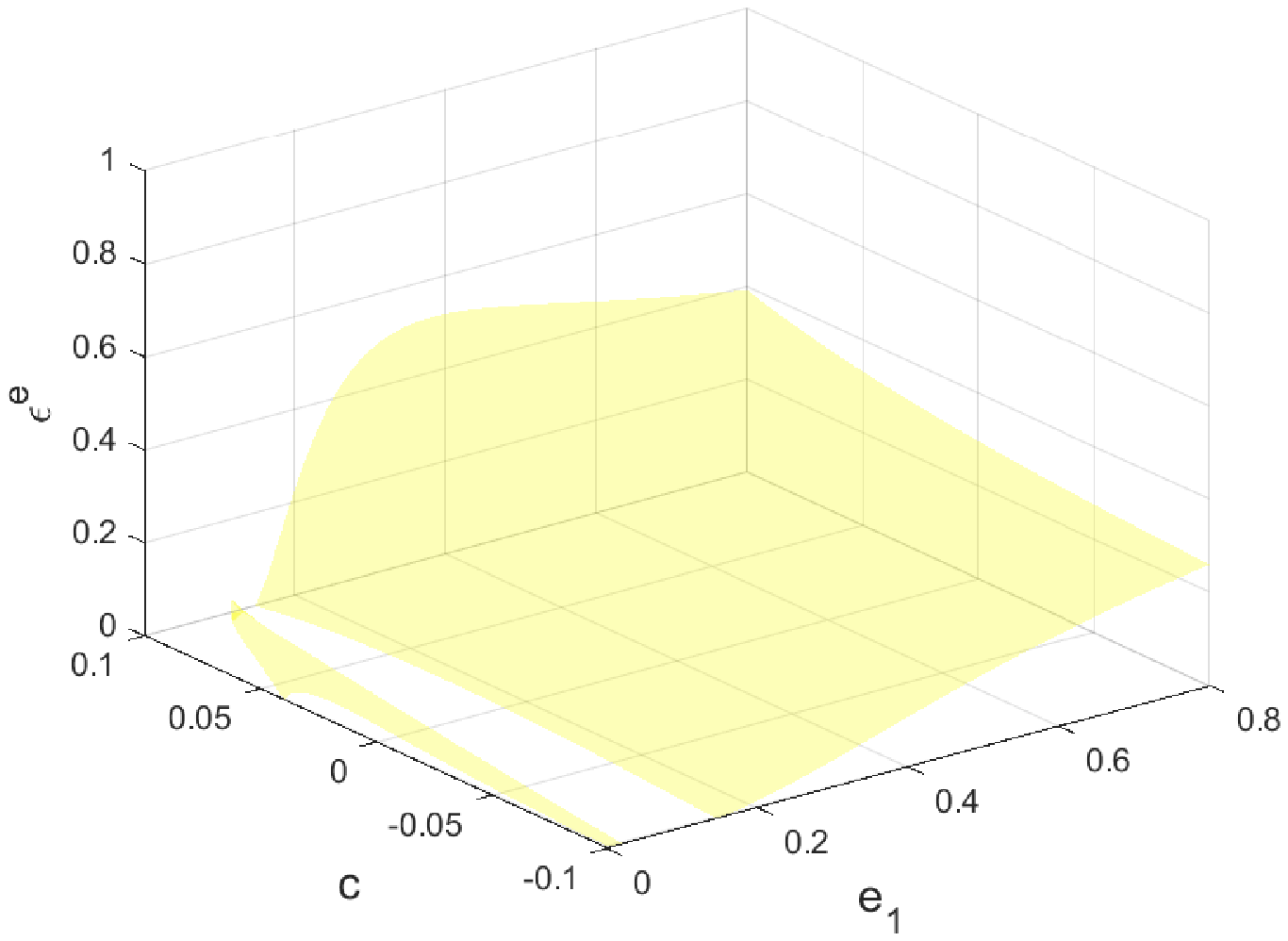}}\\
   	\subfloat[Two-parameter bifurcation plot 1]{\includegraphics[width=4cm]{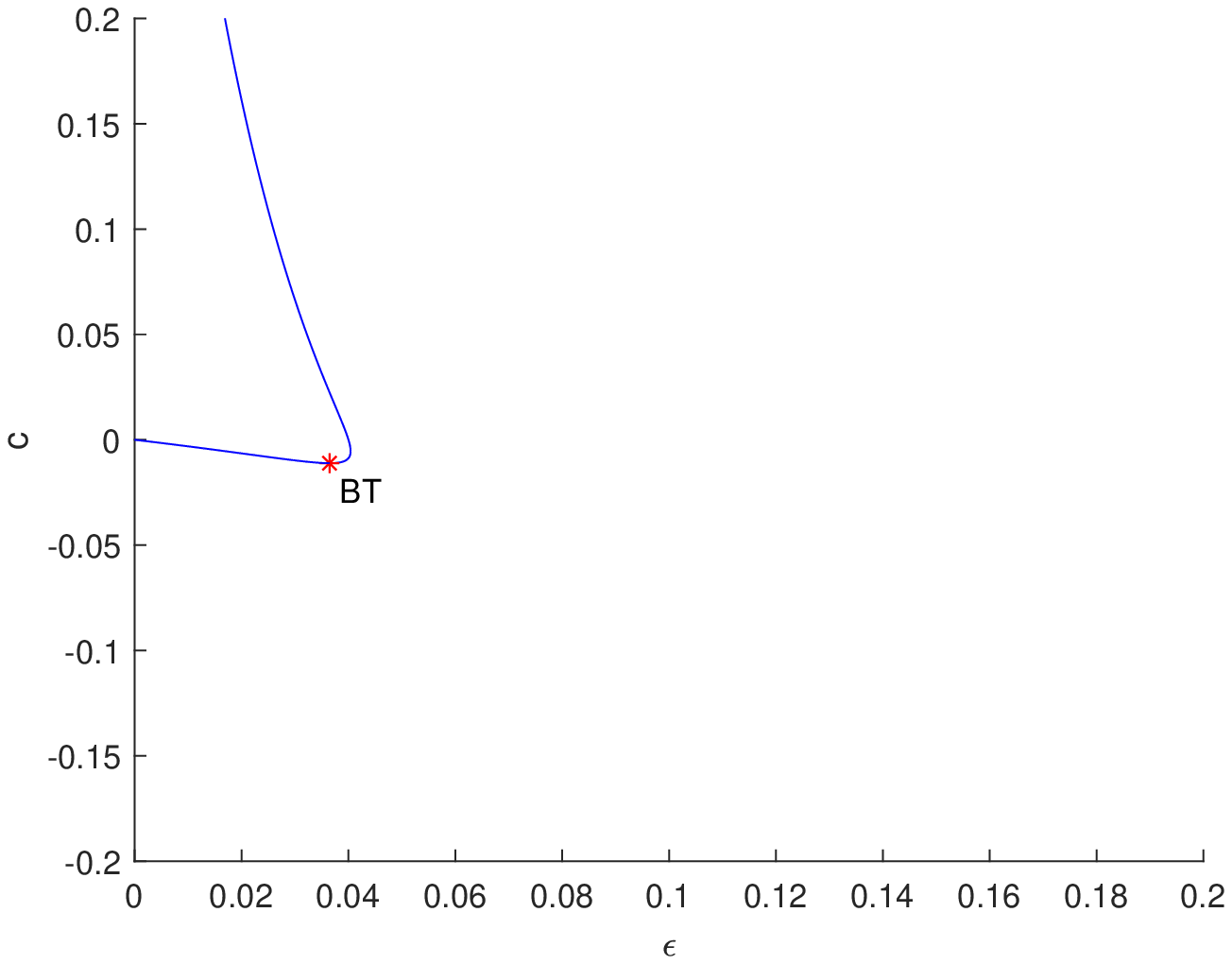}}~~
	\subfloat[Two-parameter bifurcation plot 2]{\includegraphics[width=4cm]{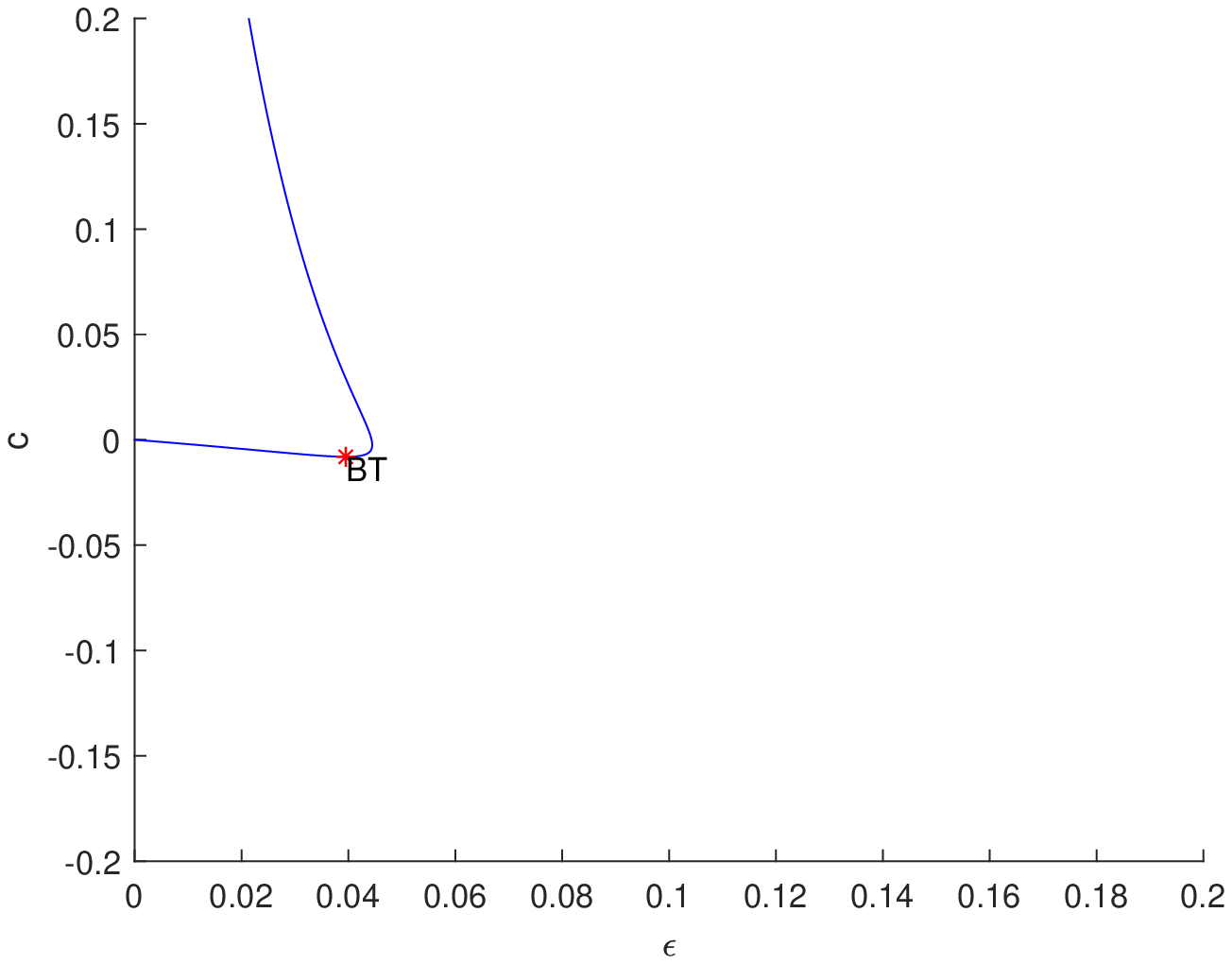}}
    \caption{The parameters' settings are the same as Fig. \ref{fig:robustlimitcyles}.  }
    \label{fig:robustlimitcyles2}
\end{figure}

To summarize, the above  analyses are not sufficient to ensure that Hopf bifurcations and limit cycles are robust in system (\ref{eesr2}) and (\ref{eees2}), and the numeric results are only valid for the chosen conditions. However, at least it gives a good indication for the robustness in some sense. This question remains to be unclear, unless more advanced mathematical tools are exploited further.
\section{CONCLUSION}
In this work, we have studied the eco-evolutionary dynamics under two different resource models.
Our main results have shown that the dynamic behaviors of the corresponding systems have similarities, but can be qualitatively different in the specialised conditions. They are able to exhibit limit cycles when the time-scale difference between the strategy and resource variables is sufficiently large. 
In particular, the system under the self-renewing resource feedback can display more complicated oscillating behaviors when the payoff parameters vary, e.g., two limit cycles of opposite stability properties. {\B However, these complicated
dynamic behaviors may be not possible in the system under the externally supplied resource feedback as far as our current analysis reveals. Moreover, preliminary analysis is conducted to investigate if the discovered dynamic behaviors persist in other parameter conditions.  
We show that different resource feedbacks dramatically alter integrated dynamics  and result in different and complicated dynamic behaviors.}

This work is a first step toward studying  eco-evolutionary dynamics as an analogue to classic Lotka–Volterra and MacArthur's consumer-resource models by considering dynamic-feedback games. There are a number of interesting future directions such as studying multiple populations \cite{Gong:18}, \cite{Kawano:19}, strategies  and  resources, {\B e.g.,   self-renewing and externally supplied resources exist at the same time. Other future directions  include carefully investigating the robustness of the discovered dynamics behaviors in the larger parameter space.}

\bibliographystyle{unsrt}
\bibliography{reference.bib} 

\section{Appendix}
\begin{subequations}\label{firstl}
\begin{align}
\label{firstlc_s}
&\resizebox{1\hsize}{!}{$
\begin{aligned}
    \ell_1^s=&-5000000(c-0.3)^2\Big(((30c - 1)\hat{c} - 100(c + 0.1)^2)^{1/2}\cdot\\
    &~~(-50c^2 + 30\hat{c}c - 55c + \hat{c} + 1)^{1/2}(c + 0.1)\cdot\\
    &~~((-3/20c^3+ 9/8c^2 - 681/500c - 567/1250)\hat{c} + c^4 \\
    &~~- 13c^3/4 + 221c^2/200 + 1737c/500 + 117/250)\\
    &~~+ ((4500c^3 + 2300c^2 - 35c + 2)
    \hat{c} - 5000c^4 \\
    &~~- 15500c^3 - 1950c^2 - 25c + 2)^{1/2}\cdot\\
    &~~((-37500c^4 + 154000c^3 - 108125c^2 - 65130c - 5184)\hat{c}\\
    &~~ + 100000c^5 - 247500c^4 - 118500c^3 + 376225c^2\\
    &~~+ 90390c + 5184)/100000 \Big)\Big/\Big((10c + 3\hat{c} - 9)^2\cdot\\
    &~~(10c - 3\hat{c} + 1)^2
    (50^2 - 30\hat{c}c + 55c- \hat{c}- 1)\cdot\\
    &~~
    (100c^2 - 30\hat{c}c + 20c+ \hat{c} + 1)\Big).
    \end{aligned}$}\\
\label{firstlc_e}
&\resizebox{1\hsize}{!}{$
\begin{aligned}
    \ell_1^e=&151165440000000000000000000(c- 2/15)^4(c - 1/4)^6\cdot\\
    &~c(c - 4/5)^5c'\Big(3((3/10c^3 + 19/200c^2 - 1/1500c \\ &~+1/45000)c' + c^4 +
     119c^3/60 + 203c^2/1800 \\ &~-1/45000)c
    (60c + 7c' + 1)^{1/2}/20 + \sqrt{2}((11/40c^5 \\ &~-73/7200c^4 -
    989/216000c^3 + 29/120000c^2 \\
    &~+ 1/200000c - 
    1/3000000)c'+ (c + 1/60)\cdot\\
    &~(c^5 + 77/60c^4
    - 1043/3600c^3 
    + 43/2000c^2 - 9/10000c\\
    &~+ 1/50000))\Big)\Big/\Big((60c + 7c' + 1)^{3/2}(-11 - c' + 60c)^2\cdot\\
    &~(10cc' - 3c' + 60c - 13)^4(100c'c^2 + 6000c^3 - 185cc' \\
    &~-5000c^2 + 24c' + 715c - 24)^4\nu_0^3\Big).
    \end{aligned}$}
\end{align}
\end{subequations}

\begin{figure}[htbp!]
    \centering
   	\subfloat[$\ell_1^s$]{\includegraphics[width=4cm]{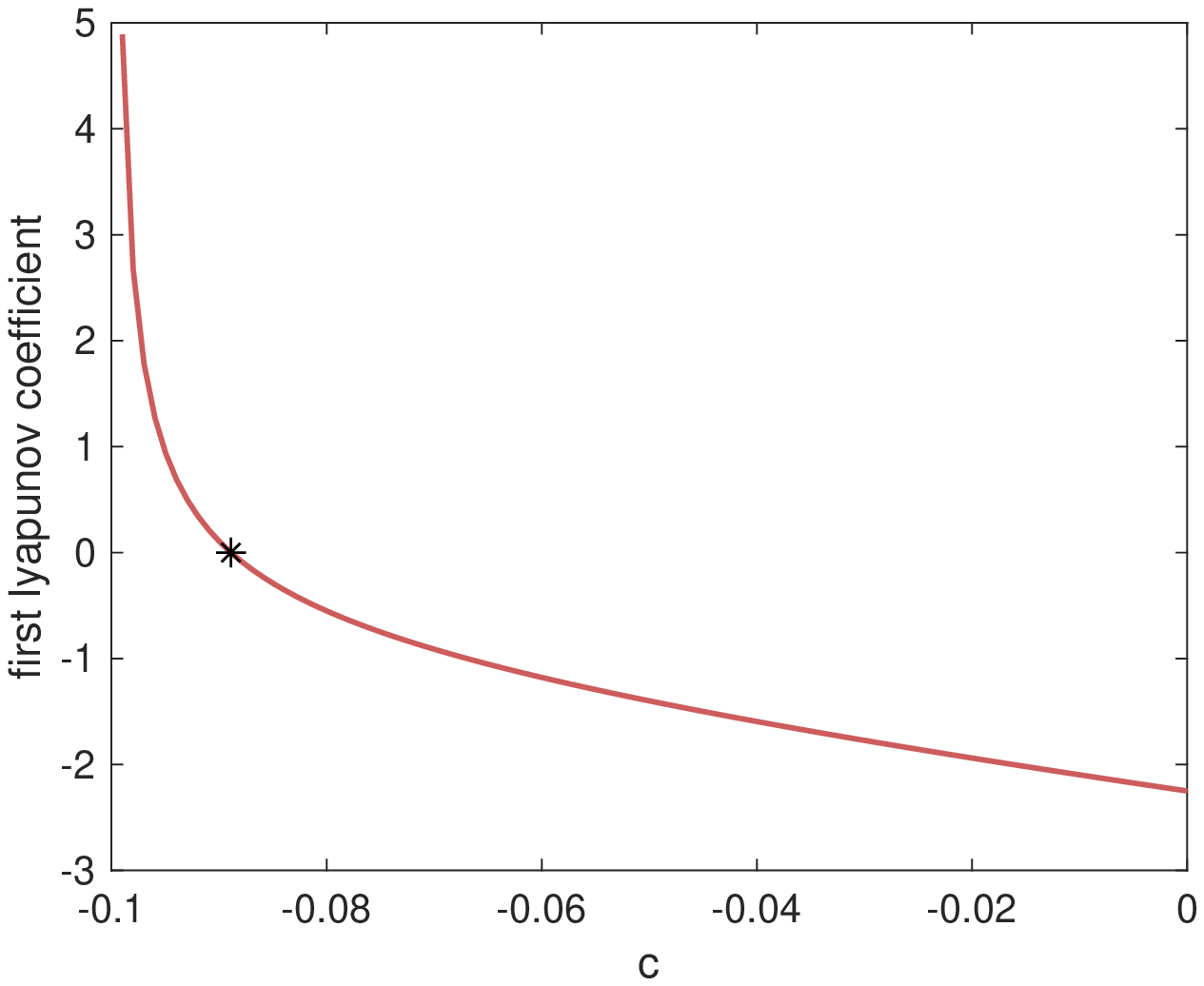}}~~
   	\subfloat[$\ell_1^e$]{\includegraphics[width=4cm]{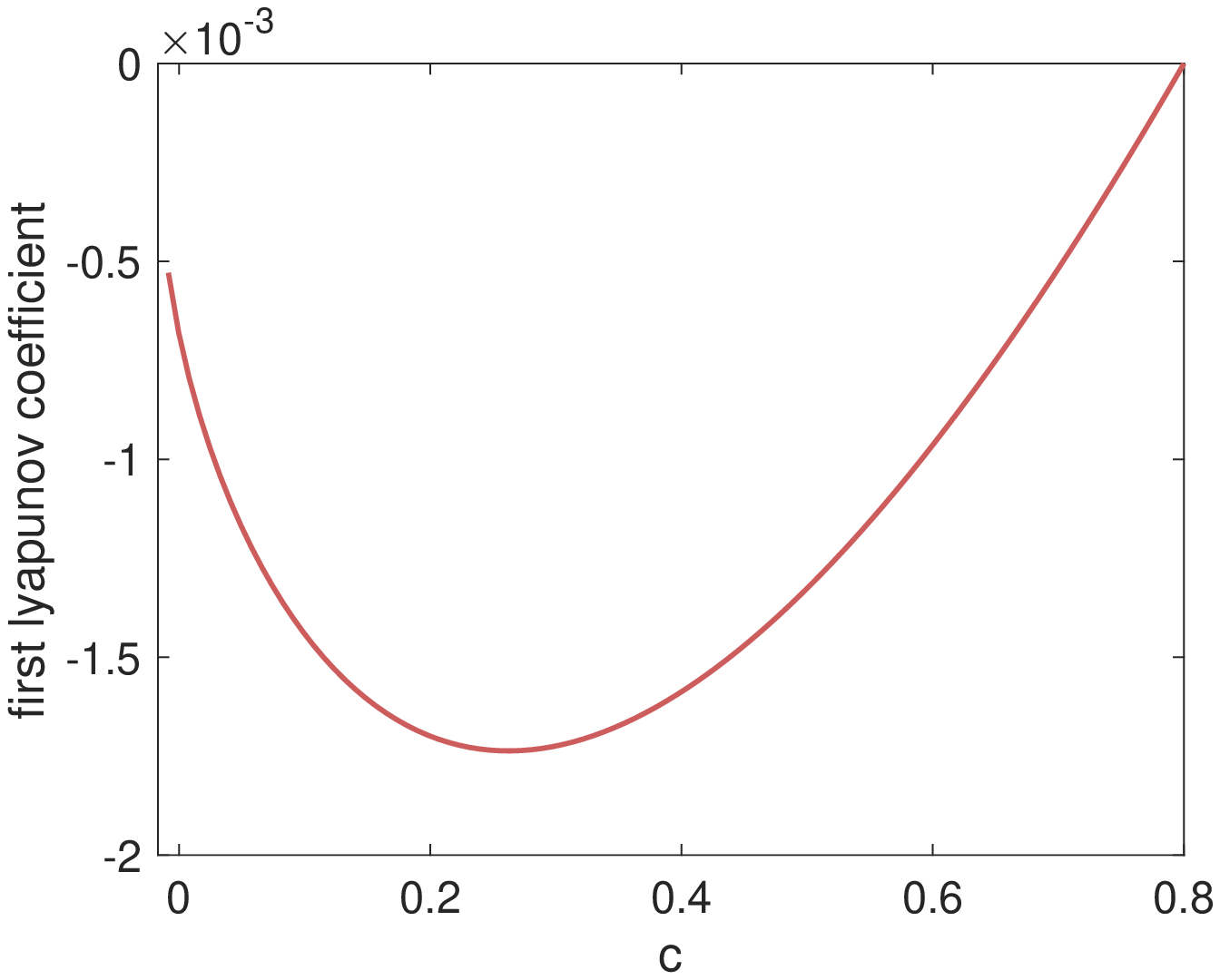}}
    \caption{The plots of the first Lyapunov coefficient. (a). the asterisk is the point where $\ell_1^s=0$. (b). $\ell_1^e<0$.}
    \label{fig:curvelc}
\end{figure}




\end{document}